\newcommand{\A}{\mathrm{A}}
\title{Hardness and Approximation of Minimum Convex Partition} 
\titlerunning{Hardness and Approximation of Minimum Convex Partition} 
\author{Nicolas Grelier}{Department of Computer Science, ETH Z\"urich, Switzerland}{nicolas.grelier@inf.ethz.ch}{}{}
\authorrunning{N. Grelier} 
\keywords{degenerate point sets, point cover, non-crossing segments, approximation algorithms, complexity} 
\begin{document}

\maketitle

\begin{abstract}
We consider the Minimum Convex Partition problem: Given a set $P$ of $n$ points in the plane, draw a plane graph $G$ on $P$, with positive minimum degree, such that $G$ partitions the convex hull of $P$ into a minimum number of convex faces. We show that Minimum Convex Partition is NP-hard, and we give several approximation algorithms, from an $\mathcal{O}(\log \textit{OPT})$-approximation running in $\mathcal{O}(n^8)$-time, where $\textit{OPT}$ denotes the minimum number of convex faces needed, to an $\mathcal{O}(\sqrt{n}\log n)$-approximation algorithm running in $\mathcal{O}(n^2)$-time. We say that a point set is $k$-directed if the (straight) lines containing at least three points have up to $k$ directions. We present an $\mathcal{O}(k)$-approximation algorithm running in $n^{\mathcal{O}(k)}$-time. Those hardness and approximation results also holds for the Minimum Convex Tiling problem, defined similarly but allowing the use of Steiner points. The approximation results are obtained by relating the problem to the Covering Points with Non-Crossing Segments problem. We show that this problem is NP-hard, and present an FPT algorithm. This allows us to obtain a constant-approximation FPT algorithm for the Minimum Convex Partition Problem where the parameter is the number of faces.
\end{abstract}


\newpage

\section{Introduction}

The CG Challenge 2020 organised by Demaine, Fekete, Keldenich, Krupke and Mitchell~\cite{CGshop}, was about solving instances of \emph{Minimum Convex Partition} (MCP). 

\begin{definition}[Demaine {\em et al.}~\cite{CGshop}: Minimum Convex Partition problem]
\normalfont Given a set $P$ of $n$ points in the plane. The objective is to compute a plane graph with vertex set $P$ (with each point in $P$ having positive degree) that partitions the convex hull of $P$ into the smallest possible number of convex faces. Note that collinear points are allowed on face boundaries, so all internal angles of a face are at most $\pi$.
\end{definition}

As explained by Bose {\em et al.}, this problem has applications in routing~\cite{bose2002online}. They showed that a routing algorithm named \emph{Random-Compass} that works for triangulations can be extended to convex partitions. Having a convex partition with few faces reduces the amount of data to store. From now on, we denote by $P$ a set of $n$ points in the plane.

In this paper, we present several approximation algorithms for MCP. We obtain those approximation algorithms by relating the MCP problem to the \emph{Covering Points with Non-Crossing Segments} (CPNCS) problem. First, we define what \emph{non-crossing segments} are.
 
 \begin{definition}[Non-Crossing Segments]
 \normalfont We call a part of a (straight) line bounded by two points a \emph{segment}. The two points are referred to as \emph{endpoints} of the segment. Note that we do not force the endpoints to be distinct, therefore we consider a point $p$ as being a segment. The endpoint of $p$ is $p$ itself. Two segments are \emph{non-crossing} if the intersection of their relative interior is empty.
 \end{definition}
 
  \begin{definition}[Covering Points with Non-Crossing Segments]
 \normalfont Given a set $P$ of $n$ points, find a minimum number of non-crossing segments whose endpoints are in $P$ such that each point of $P$ is contained in at least one segment.
 \end{definition}
 
The condition that the endpoints of the segments must be in $P$ has no effect on the number of segments required. We add it as it simplifies some arguments. Note that CPNCS is not a so-called \emph{set cover problem} nor an \emph{exact cover problem}. We believe that CPNCS is interesting in itself. Even though it is a very natural problem, to the best of our knowledge it had not been introduced before.

 \subsection{NP-hardness results}
 
 Fevens, Meijer and Rappaport first considered the MCP problem in 2001~\cite{fevens2001minimum}, and its complexity was explicitly asked about by Knauer and Spillner in 2006~\cite{knauer2006approximation}. It has remained open since then~\cite{barboza2019minimum, CGshop}. We show in Section~\ref{sec:NPMCP} that MCP is NP-hard. To do this, we use the decision version of the problem, as stated below:
 
 \begin{definition}[MCP - decision version]
\label{def:MCP}
\normalfont Given a set $P$ of points in the plane and a natural number $k$, is it possible to find at most $k$ closed convex polygons whose vertices are points of $P$, with the following properties:
\begin{itemize}
    \item The union of the polygons is the convex hull of $P$,
    \item The interiors of the polygons are pairwise disjoint,
    \item No polygon contains a point of $P$ in its interior.
\end{itemize}
\end{definition}
 
 We also show NP-hardness of a similar problem, which we call \emph{Minimum Convex Tiling} problem (MCT). The problem is exactly as in Definition~\ref{def:MCP}, but the constraint about the vertices of the polygons is removed (i.e. they need not be points of $P$). This can make a difference as shown in Figure~\ref{fig:MCPneqMCT}. Equivalently, the MCT problem corresponds to the MCP problem when Steiner points are allowed. A \emph{Steiner point} is a point that does not belong to the point set given as input, and which can be used as a vertex of some polygons. The MCT problem has been studied in 2012 by Dumitrescu, Har-Peled and T{\'o}th, who asked about the complexity of the problem~\cite{dumitrescu2012minimum}. We answer their question, and our proofs are very similar for MCP and MCT.

\begin{figure}[h]
    \centering
    \begin{tikzpicture}[scale=1]

\draw[fill, orange, opacity=0.3] (-2,0)--(0,2)--(2,3)--(2,0)--cycle;
\draw[fill, red, opacity=0.3] (-2,0)--(0,2)--(-2,3)--cycle;
\draw[fill, blue, opacity=0.3] (2,3)--(0,2)--(-2,3)--cycle;

\node at (0,2) {\textbullet};
\node at (-2,0) {\textbullet};
\node at (-2,3) {\textbullet};
\node at (2,0) {\textbullet};
\node at (2,3) {\textbullet};

\draw[fill, red, opacity=0.3] (5,0)--(7,0)--(7,3)--(5,3)--cycle;
\draw[fill, blue, opacity=0.3] (9,0)--(7,0)--(7,3)--(9,3)--cycle;

\node at (7,2) {\textbullet};
\node at (5,0) {\textbullet};
\node at (5,3) {\textbullet};
\node at (9,0) {\textbullet};
\node at (9,3) {\textbullet};

    \end{tikzpicture}
    \caption{A minimum partition with three convex polygons and a tiling with two.}
    \label{fig:MCPneqMCT}
\end{figure} 

We show in Section~\ref{sec:NPhardCPNCS} that CPNCS is NP-hard, even for some constrained point sets, using a reduction from \emph{Maximum Independent Set in Intersection Graphs of Segments}.

\subsection{Approximation algorithms}
For the related problem \emph{Minimum Convex Partition of Polygons with Holes}, Bandyapadhyay, Bhowmick and Varadarajan showed the existence of a $(1+\varepsilon)$-approximation algorithm running in time $n^{\mathcal{O}((\log n / \varepsilon)^4)}$~\cite{bandyapadhyay2014approximation}. Although they only consider holes with non empty interior, one can observe that their proof extends to the case of point holes. This is an even more general setting than MCP for point sets, so their algorithm also applies in our setting. This implies that MCP is not APX-hard unless $\textit{NP}\subseteq \textit{DTIME}(2^{\mathrm{polylog}~n})$.

Under the assumption that no three points are collinear, Knauer and Spillner have shown a $\frac{30}{11}$-approximation algorithm~\cite{knauer2006approximation} for MCP in 2006. As a lower bound on the number of convex faces for one particular point set, they rely on the observation that each inner point has degree at least $3$. This gives a lower bound on the number of edges, and therefore on the number of faces, by Euler's formula. Note that the restriction that no three points are on a line is necessary, as shown in Figure~\ref{fig:noLowerBound}. There are only two faces in a minimum convex partition of this point set, and all the inner points have degree $2$.
  
\begin{figure}[h]
    \centering
    \begin{tikzpicture}[scale=0.7]

\draw[fill, red, opacity=0.3] (2.32,0) -- (1.3,1.11)-- (0.16,1.73)-- (-2,1.27) --(-2.78,-1.19)  --cycle;
\draw[fill, blue, opacity=0.3] (2.32,0) --  (2.38,-2.25) --(-0.44,-2.69)--(-2.78,-1.19)  --cycle;

\node at  (1.3,1.11) {\textbullet};
\node at  (0.16,1.73) {\textbullet};
\node at  (-2,1.27) {\textbullet};
\node at  (-2.78,-1.19) {\textbullet};
\node at  (-0.44,-2.69) {\textbullet};
\node at  (2.38,-2.25) {\textbullet};
\node at(2.32,0) {\textbullet};
\node at(-1.9982297154899895,-1.0075869336143308) {\textbullet};
\node at(-1.1974920969441518,-0.8207481559536354) {\textbullet};
\node at(-0.6398524762908324,-0.6906322444678609) {\textbullet};
\node at(0.08849315068493152,-0.5206849315068494) {\textbullet};
\node at(0.995131717597471,-0.3091359325605901) {\textbullet};

    \end{tikzpicture}
    \caption{The number of inner points can be arbitrarily much larger than the number of convex faces required.}
    \label{fig:noLowerBound}
\end{figure}
  
Additionally, Knauer and Spillner showed how to adapt any constructive upper bound on the number of faces into an approximation algorithm. More explicitly, they showed that if one can compute in polynomial time a convex partition with at most $\lambda n$ convex faces, then there exists a $2\lambda$-approximation algorithm running in polynomial time. The best result to date is a proof by Sakai and Urrutia that one can partition a point set in quadratic time using at most $\frac{4}{3}n$ convex faces (the result was presented at the 7th JCCGG in 2009, the paper appeared on arXiv in 2019)~\cite{sakai2019convex}. Although they do not mention it, combining this result with the one by Knauer and Spillner gives a quadratic time $\frac{8}{3}$-approximation algorithm.
  
Concerning previous upper bounds, Neumann-Lara, Rivero-Campo and Urrutia first showed in 2004 how to construct in quadratic time a partition of any point set with at most $\frac{10}{7}n$ convex faces~\cite{neumann2004note}. In 2006, Knauer and Spillner improved this to $\frac{15}{11}n$ convex faces~\cite{knauer2006approximation}. As said above, the best known upper bound is $\frac{4}{3}n$, as proven by Sakai and Urrutia in 2009.
  
Relatedly for lower bounds, Garc{\'\i}a-Lopez and Nicol{\'a}s have given in 2013 a construction of point sets for which any convex partition has at least $\frac{35}{32}n-\frac{3}{2}$ faces~\cite{garcia2013planar}. 
 
 All these results concerning upper bounds hold for all point sets, even where many points are on a line. Indeed, slightly shifting the points so that no three points are on a line can only increase the number of convex faces needed. So an upper bound for point sets where no three points are on a line also holds for all point sets. However, as mentioned above, the lower bound used by Knauer and Spillner does not extend to our setting, where we consider all point sets. They say that a constant-approximation algorithm would be desirable for unrestricted point sets, but so far not even an $\mathcal{O}(n^{1-\varepsilon})$-approximation is known. For the MCT problem, Dumitrescu, Har-Peled and T{\'o}th showed the existence of a $3$-approximation algorithm for point sets with no three collinear points~\cite{dumitrescu2012minimum}. They also ask whether a constant-approximation algorithm exists when this constraint is removed. However, so far no $\mathcal{O}(n^{1-\varepsilon})$-approximation algorithm is known. In Section~\ref{sec:approxCPNCS}, we prove the following:
 
 \begin{theorem}\label{thm:approxAlgo}
 There exist $\mathcal{O}(\log \textit{OPT})$-approximation algorithms for MCP, MCT and CPNCS running in $\mathcal{O}(n^8)$-time.
 \end{theorem}
 
 Allowing several points to be on a line does not simply create tedious technicalities to deal with. The crux of the matter is to find, for a fixed point set, an exploitable lower bound on the number of faces in a minimum convex partition. When no three points are on a line, the number of inner points in $P$ gives a linear lower bound on the number of faces in a convex partition~\cite{knauer2006approximation}, and in a convex tiling~\cite{dumitrescu2012minimum}. In this paper, we consider point sets with no restriction. We introduce the CPNCS problem as it pinpoints where the difficulty of finding a constant-approximation algorithm for MCP is and makes the problem easier to study. The \emph{inner points} of $P$ are the points not on the boundary of the convex hull. We show in Section~\ref{sec:MCPCPNCS} the following:
 
 \begin{theorem}\label{thm:MCPEquivalentCPNCS}
 Let $P$ be a set of $n$ points with at least one inner point, and let $\lambda\geq1$ be a real number. Let $f_m$ denote the minimum number of faces in a convex partition of $P$. Let $s_m$ denote the minimum number of non-crossing segments in a covering of the inner points of $P$, denoted by $P_i$. 
 \begin{enumerate}
     \item It holds that $\frac{s_m}{6}\leq f_m \leq 8 s_m$.
     \item Given a covering of $P_i$ with $s\leq \lambda s_m$ non-crossing segments, it is possible to compute in $\mathcal{O}(n^2)$-time a convex partition of $P$ with at most $24\lambda f_m$ convex faces.
     \item Given a convex partition of $P$ with $f\leq \lambda f_m$ convex faces, it is possible to compute in $\mathcal{O}(n)$-time a covering of $P_i$ with at most $44\lambda s_m$ non-crossing segments.
 \end{enumerate}
The theorem also holds when considering convex tilings instead of convex partitions.
 \end{theorem}
 
 \begin{figure}[ht]
    \centering
    \begin{tikzpicture}[scale=1]
    
  \draw[red,thick] (2,1)--(2,6);    
    \draw[red,thick] (-0.3,1)--(0.7,2); 
      \draw[red,thick] (-1.5,3)--(1.5,4); 
        \draw[red,thick] (-0.6,5)--(0.3,6); 
        
\draw[red,thick] (6.7,1)--(9,1);
\draw[red,thick] (7.7,2)--(9,2);
\draw[red,thick] (5.5,3)--(9,3);
\draw[red,thick] (8.5,4)--(9,4);
\draw[red,thick] (6.4,5)--(9,5);
\draw[red,thick] (7.3,6)--(9,6);
    
\draw (-2,2)--(-1.5,3);
\draw (-0.3,1)--(2,0);
\draw (-1.5,3)--(-0.6,5)--(-2,7);
\draw (-2,7)--(0.3,6)--(2,7);
\draw (1.5,4)--(2,5);
\draw (1.5,4)--(0.7,2)--(2,2);
\draw (-2,1)--(-0.3,1);
\draw (2,0)--(3,0)--(3,7)--(2,7);
\draw (2,0)--(-2,0)--(-2,7)--(2,7);
\draw (2,6)--(2,7);
\draw (2,0)--(2,1);
    
\foreach \y in {0,...,7}{
\node at (-2,\y) {\textbullet};
\node at (2,\y) {\textbullet};
\node at (3,\y) {\textbullet};
\node at (5,\y) {\textbullet};
\node at (9,\y) {\textbullet};
\node at (10,\y) {\textbullet};
\draw (9,\y)--(10,\y);}   

\node at (-0.3,1) {\textbullet};
\node at (0.7,2) {\textbullet};
\node at (-1.5,3) {\textbullet};
\node at (1.5,4) {\textbullet};
\node at (-0.6,5) {\textbullet};
\node at (0.3,6) {\textbullet};

\node at (6.7,1) {\textbullet};
\node at (7.7,2) {\textbullet};
\node at (5.5,3) {\textbullet};
\node at (8.5,4) {\textbullet};
\node at (6.4,5) {\textbullet};
\node at (7.3,6) {\textbullet};

\draw (9,7)--(5,7)--(5,0)--(9,0);
\draw (10,7)--(10,0);
\draw (5,1)--(6.7,1);
\draw (5,2)--(7.7,2);
\draw (5,3)--(5.5,3);
\draw (5,4)--(8.5,4);
\draw (5,5)--(6.4,5);
\draw (5,6)--(7.3,6);

    \end{tikzpicture}
    \caption{On the left side, a minimum covering of the inner points of $P$ with $4$ segments. A convex partition which contains those segments has at least $9$ convex faces. On the right side, a covering of the inner points of $P$ with $6$ segments. There exists a minimum convex partition of $P$ with $7$ faces, which contains those segments.}
    \label{fig:MCPneqCPNCS}
\end{figure}
 
 \begin{remark}
The idea behind the similarity of MCP, MCT and CPNCS is that they are all about maximizing the number of vertices of degree $2$ with incident edges being aligned in a plane straight-line drawing of a graph on a point set. We illustrate in Figure~\ref{fig:MCPneqCPNCS} that MCP and CPNCS are not strictly equivalent: We give a point set $P$ in which the unique minimum convex partition of $P$ does not contain the segments of any minimum covering of the inner points of $P$ with non-crossing segments. Let us denote by $P_s$ the set of endpoints of the three non-vertical segments of the covering. In any convex partition of $P$ that contains those four segments of the covering, the points in $P_s$ need to have degree at least $3$. Moreover, each of them has to be connected to at least one point not in $P_s$. Finally, the topmost and bottommost of those six endpoints must be connected to at least two points not in $P_s$. Therefore, the convex partition drawn on the left is one that minimises the number of edges (and thus of faces) among the convex partitions that contain this minimum covering. This implies that finding a minimum covering of the inner points of some point set $P$ with non-crossing segments does not necessarily help in finding a minimum convex partition of $P$. Nonetheless, Theorem~\ref{thm:MCPEquivalentCPNCS} states that such a covering leads to an approximation for the MCP problem.
\end{remark}

 We call the algorithm for CPNCS the algorithm that iteratively picks a new segment among the valid ones that cover as many points not yet covered as possible, until all points in $P$ are covered, the \emph{greedy} algorithm. As we consider points to be potential segments, the algorithm terminates. We prove the following in Section~\ref{sec:badPointSet}.
 
 \begin{theorem}\label{thm:lowerBoundGreedy}
 There exist point sets for which the greedy algorithm for solving CPNCS realises an $\Omega(\sqrt{n})$-approximation.
 \end{theorem}

 



The CPNCS problem bears a resemblance with \emph{Covering Points with Lines} (CPL), defined below, that we use in one of our approximation algorithms. 
 
\begin{definition}[Covering Points with Lines]
\normalfont Given a set $P$ of $n$ points, find a minimum number of lines such that each point of $P$ is contained in at least one line.
\end{definition}

Before going into the proofs, we want to make a remark that we deem interesting. In~\cite{kumar2000hardness}, Anil Kumar, Arya and Ramesh mention that the CPL problem was motivated by the problem of \emph{Covering a Rectilinear Polygon with Holes using Rectangles}. They say that getting a $o(\log n)$-approximation for this problem seems to require a better understanding of CPL. However, they are ``not sure of the exact nature of this relationship''. In this paper, we show the hardness of MCP by using tools developed by Lingas to show NP-hardness of \emph{Minimum Rectangular Partition for Rectilinear Polygons with Holes}~\cite{lingas1982power}. The difference between a covering and a partition is that, in the latter, objects are interior-disjoint. Moreover, we prove that obtaining a constant-approximation algorithm for MCP is equivalent to finding one for CPNCS. Again, CPNCS is the non-crossing version of CPL. We hope this paper helps to better understand the relationship between these problems.

 \subsection{Exact algorithms, FPT algorithms}
 
Under the assumptions that the points lie on the boundaries of a fixed number $h$ of nested convex hulls, and that no three points lie on a line, Fevens, Meijer and Rappaport gave an algorithm for solving MCP in time $\mathcal{O}(n^{3h+3})$~\cite{fevens2001minimum}. Observe that this is not an FPT algorithm. Some integer linear programming formulations of the problem have been recently introduced~\cite{barboza2019minimum,sapucaia2021solving,cambazard2021integer}.
  
A first FPT algorithm with respect to the number $k$ of inner points was introduced by Grantson and Levcopoulos, with running time $\mathcal{O}(2^{16k}k^{6k-5}n)$~\cite{grantson2004fixed}. The idea of the algorithm is to enumerate all plane graphs on the inner points, and then for each to them to guess how to connect the inner points to the points on the boundary of the convex hull. Another FPT algorithm with respect to the number of inner points was later found by Spillner, with running time $\mathcal{O}(2^kk^4n^3+n \log n)$~\cite{spillner2008fixed}.
 
We show in Section~\ref{sec:FPT} the existence of an FPT algorithm that checks whether there is a solution for CPNCS with at most $k$ non-crossing segments, running in time $\mathcal{O}(2^{k^2}k^{7k}+n^4 \log n)$. By Theorem~\ref{thm:MCPEquivalentCPNCS}, this gives us a constant-approximation FPT algorithm for MCP and MCT, where the parameter is the number of convex faces needed. Under the assumption that no three points are on a line, the number of faces in a minimum convex partition or in a minimum convex tiling is the same as the number of inner points, up to a constant multiplicative factor~\cite{knauer2006approximation, dumitrescu2012minimum}. However, when removing this assumption, the number of inner points can be arbitrarily much larger than the minimum number of convex faces, as shown in Figure~\ref{fig:noLowerBound}. Our algorithm runs in time $\mathcal{O}(2^{36f^2}f^{42f+1}+n^4 \log n)$, where $f$ denotes the minimum number of convex faces needed in a convex partition or in a convex tiling.

\section{The relation between MCP, MCT and CPNCS}
\label{sec:MCPCPNCS}

Throughout this section, we denote by $P$ a point set in the plane. We denote by $P_i$ the set of inner points of $P$. Let $p$ be in $P$. If $P$ and $P\setminus\{p\}$ do not have the same convex hull, we say that $p$ is an \emph{extreme point}. We denote by $P'$ the extreme points in $P_i$. Note that a point might lie on the boundary of the convex hull of a point set without being an extreme point. We say that $P$ is \emph{special} if $|P'|\leq 2$. Recall that for a given covering of a point set $Q$ with non-crossing segments, we always assume that the endpoints of the segments are in $Q$.

\begin{lemma}\label{lemma:segmentToPartition}
Let $P$ be a set of $n$ points that is not special. Given a covering $K$ of $P_i$ with $s$ non-crossing segments, one can compute in $\mathcal{O}(n^2)$-time a convex partition $\Sigma$ of $P$ with at most $4s+2|P'|$ faces. Moreover a segment in $K$ is the union of some edges in $\Sigma$.
\end{lemma}

\begin{proof}
Let $Q\subseteq P_i$ be the set of the endpoints of segments in the covering. Note that $|Q|$ is at most $2s$. As $P$ is not special, there exist triangulations of $Q$. We compute a constrained triangulation (for example Delaunay) of $Q$ with respect to the segments of the covering. This can be done in $\mathcal{O}(n \log n)$-time~\cite{chew1989constrained}, and there are at most $2|Q|$ faces. We observe that the triangulation of $Q$ gives a convex partition of $P_i$. We add all segments between consecutive points on the boundary of the convex hull of $P$. Now, it remains to deal with the surface that is within the convex hull of $P$, but not within the convex hull of $P_i$. To do that, we add for each point in $P'$ at most two edges to points on the boundary of the convex hull of $P$. We do it such that the angle between any consecutive edges around a point in $P'$ is at most $\pi$. This takes $\mathcal{O}(n^2)$ time~\cite{knauer2006approximation}. We have now obtained a convex partition of~$P$.
\end{proof}

If one is interested in a convex tiling instead of a convex partition in Lemma~\ref{lemma:segmentToPartition}, note that it is possible to add only one edge for each point in $P'$, resulting in a convex tiling with at most $4s+|P'|$ faces.

\begin{lemma}\label{lemma:partitionToSegment}
Let $P$ be a set of $n$ points. Given a convex tiling $\Sigma$ of $P$ with $f$ faces, one can compute in $\mathcal{O}(n)$-time a covering $K$ of $P_i$ with at most $6f-2|P'|$ non-crossing segments. Moreover a segment in $K$ is the union of some edges in $\Sigma$.
\end{lemma}

\begin{figure}
    \centering
    \begin{tikzpicture}[scale=0.5]
\draw[blue, thick]  (-0.78,5.26)--(1.59,-3.54); 
\draw  (-4.19,2.72)-- (1.63,7.06);
\draw  (1.63,7.06)-- (10.39,2.38);
\draw  (10.39,2.38)-- (10.35,-2.76);
\draw  (10.35,-2.76)-- (3.23,-6.68);
\draw  (3.23,-6.68)-- (-6.25,-3.6);
\draw  (-6.25,-3.6)-- (-4.19,2.72);
\draw[OliveGreen,dashed,thick]  (-2.330311907110875,4.106777718752372)-- (0.40888402202969454,-5.763434893233276);
\draw[blue,thick] (-4.19,2.72)-- (-3.33,-2.08);
\draw[red,thick]  (-3.33,-2.08)-- (-6.25,-3.6);
\draw[blue,thick]  (-3.33,-2.08)-- (0.40888402202969454,-5.763434893233276);
\draw[red,thick]  (5.37,0.2)-- (10.39,2.38);
\draw  (5.37,0.2)-- (2.23,3.12);
\draw  (2.23,3.12)-- (1.59,-3.54);
\draw[red,thick]  (2.23,3.12)-- (1.63,7.06);
\draw[red,thick]  (1.59,-3.54)-- (0.40888402202969454,-5.763434893233276);
\draw  (5.37,0.2)-- (6.31,-3.96);
\draw[blue,thick]  (6.31,-3.96)-- (3.23,-6.68);
\draw[red,thick]  (6.31,-3.96)-- (10.35,-2.76);
\draw[blue,thick]  (1.59,-3.54)-- (3.23,-6.68);
\node at (1.27,-2.35)  {\textbullet};
\node at (0.86,-0.82)  {\textbullet};
\node at (-4.19,2.72)  {\textbullet};
\node at (1.63,7.06)  {\textbullet};
\node at (10.39,2.38)  {\textbullet};
\node at (10.35,-2.76)  {\textbullet};
\node at (3.23,-6.68)  {\textbullet};
\node at (-6.25,-3.6)  {\textbullet};
\node at (-2.330311907110875,4.106777718752372)  {\textbullet};
\node at (0.40888402202969454,-5.763434893233276)  {\textbullet};
\node[fill=black,regular polygon, regular polygon sides=3,inner sep=1.2pt] at (-1.8533088728854605,2.387980857863238) {};
\node[fill=black,regular polygon, regular polygon sides=3,inner sep=1.2pt] at  (-1.404409502110384,0.7704506243371032) {};
\node[fill=black,regular polygon, regular polygon sides=3,inner sep=1.2pt] at (-0.8063537447789324,-1.384538515317315) {};
\node[fill=black,regular polygon, regular polygon sides=3,inner sep=1.2pt] at (-0.2678470762135259,-3.324952946854639) {};

\node[fill=black,regular polygon, regular polygon sides=4,inner sep=1.7pt] at (-3.33,-2.08) {};
\node at (5.37,0.2)  {\textbullet};
\node at (6.800742234448866,0.8213183408562803)  {\textbullet};
\node at (8.050814882081141,1.3641785742902164)  {\textbullet};
\node at (-0.78,5.26)  {\textbullet};
\node[fill=black,regular polygon, regular polygon sides=4,inner sep=1.7pt] at (9.026564728506182,1.7879105793114491) {};
\node[fill=black,regular polygon, regular polygon sides=4,inner sep=1.7pt] at (2.23,3.12) {};
\node[fill=black,regular polygon, regular polygon sides=4,inner sep=1.7pt] at (1.59,-3.54) {};

\node at (2.0420829064544095,1.164487745291197)  {\textbullet};
\node at (1.811911892190086,-1.2307293718969188)  {\textbullet};

\node[fill=black,regular polygon, regular polygon sides=4,inner sep=1.7pt] at (6.31,-3.96) {};
\node at (5.904567765487212,-2.1657467068370244)  {\textbullet};

    \end{tikzpicture}
    \caption{Illustration of Lemma~\ref{lemma:partitionToSegment}. The green dashed edge and the triangle points are removed at the beginning for the analysis, and added back at the end. The extreme points in $P'$ are represented as square points. The edges in $E'$ are in red. The other edges from $P'$ to the boundary of the convex hull are in blue.}
    \label{fig:partitionToSegment}
\end{figure}

\begin{proof}
The proof is illustrated in Figure~\ref{fig:partitionToSegment}. Let us denote by $G_0=(V_0,E_0)$ the plane graph corresponding to the convex tiling. Observe that some points in $V_0$ might not be in $P$. Also, the relative interior of an edge in $E_0$ might overlap with points in $P$. We assume that $G_0$ is given with a doubly connected edge list (DCEL) structure. If there is an edge between two points on the boundary of the convex hull of $V_0$, but not consecutive, we remove this edge. Note that this decreases the number of faces by $1$, and does not break the convexity property. We denote by $m$ the number of such edges that we have removed. We also remove from $P$ all points contained in the relative interior of an edge between two points on the boundary of the convex hull. We denote by $P''$ the extreme points in $P_i$ that we have not removed. As an edge contains at most two points in $P'$, we have $|P''|\geq |P'|-2m$. Using the DCEL structure, this can be done in $\mathcal{O}(n)$-time. We have obtained a new graph $G=(V,E)$, and there are $f-m$ convex faces in $G$. We denote by $Q$ the set of inner points that are of degree at least $3$ in $G$. We set $k:=|Q|$. Now observe that for each point $p$ in $P''$, there exists at least one edge $e$ in $E$ with one endpoint in $Q$, one endpoint on the boundary of the convex hull, such that $e$ overlaps with a point in $P''$. This is because if we consider $p$ and the two lines going through $p$ and one of the two consecutive vertices in $'P'$ (the one before $p$ and the one after $p$ when going around $P''$ in clockwise order), they define a wedge in which one edge must lie because of convexity. The point in $P''$ can be an endpoint of $e$ or in its relative interior. If for a point $p\in P''$ there are several edges that satisfy the conditions, we choose one arbitrarily. We denote these edges by $E'$. An edge in $E'$ overlaps with exactly one point in $P''$, thus $|E'|=|P''|$. We denote by $E_b$ the edges not in $E'$ that have a point on the boundary of the convex hull and the other in $Q$, and we denote $|E_b|$ by $m'$. The vertices on the boundary of the convex hull are adjacent to two other vertices on the boundary of the convex hull. Moreover, those vertices are incident to $|P''|+m'$ additional edges. We have $2|E|=\sum_{v \in V} \deg(v)\geq 3k+2(n-k)+|P''|+m'=k+2n+|P''|+m'$. By Euler's formula, we have $f-m=|E|-n+1\geq \frac{k+|P''|+m'}{2}+1$.

Now, the solution consists of the union of all edges in $E$ incident to two points in $Q$, with the $m$ edges in $E_0$ that we have removed, and with the $|P''|+m'$ edges in $E' \cup E_b$. We may need those edges as they might overlap with points in $P_i$. Note that there are at most $3k$ edges in $E$ incident to two points in $Q$ as $G$ is plane. Moreover, all points in $P_i$ are indeed covered by the edges in our solution. Thus, we obtain a covering of $P_i$ with $s$ segments, where $s\leq 3k+m+m'+|P''|\leq 3(2(f-m)-|P''|-m')+m+m'+|P''|\leq 6f -5m -2|P''| \leq 6f -5m -2(|P'|-2m)\leq 6f-2|P'|$.
\end{proof}

It is now possible to combine Lemmas~\ref{lemma:segmentToPartition} and~\ref{lemma:partitionToSegment} to prove Theorem~\ref{thm:MCPEquivalentCPNCS}.

\begin{proof}[Proof of Theorem~\ref{thm:MCPEquivalentCPNCS}]
Let us denote by $f'_m$ the minimum number of convex faces in a convex tiling of $P$. We have $f'_m \leq f_m$. First, if $P$ is special, then the three problems are trivial to solve. It remains to prove statement $1$. Recall that we assume that $P_i$ is not empty. As $P$ is special, we need exactly one segment to cover the inner points, and thus $s_m=1$. Now observe that we need between two and four convex faces in a convex partition, and exactly two convex faces in a convex tiling, thus it holds $\frac{s_m}{6}\leq f'_m\leq f_m \leq 8s_m$.

Let us now assume that $P$ is not special. Starting with a covering of the set $P_i$ with $s_m$ non-crossing segments, Lemma~\ref{lemma:segmentToPartition} indicates that it is possible to find a convex partition with at most $4s_m+2|P'|$ convex faces. Therefore we have $4s_m+2|P'|\geq f_m$. Starting from a convex tiling with $f'_m$ convex faces, we know from Lemma~\ref{lemma:partitionToSegment} that there exists a covering of $P_i$ with at most $6f'_m-2|P'|$ non-crossing segments. This implies $6f'_m-2|P'|\geq s_m$. Note that any segment in a covering can cover at most two points in $P'$. Therefore we have $s_m \geq |P'|/2$. Putting everything together, we obtain $\frac{s_m}{6}\leq f'_m\leq f_m\leq 8s_m$.

Let us consider a covering of $P_i$ with $s\leq \lambda s_m$ non-crossing segments. By Lemma~\ref{lemma:segmentToPartition}, we can compute in $\mathcal{O}(n^2)$-time a convex partition of $P$ with at most $f:=4s+2|P'|$ faces. We now have $f=4s+2|P'|\leq 4 \lambda s_m+2|P'|\leq 4 \lambda (6f'_m-2|P'|)+2|P'|\leq 24 \lambda f'_m$. This implies that the convex partition we have is a $24\lambda$-approximation for MCP and for MCT.

Let us consider a convex tiling of $P_i$ with $f\leq \lambda f_m$ convex faces. Note that this encompasses the case where the convex tiling is actually a convex partition. By Lemma~\ref{lemma:partitionToSegment}, we can compute a covering of $P_i$ with at most $s:=6f-2|P'|$ segments. We have $s=6f-2|P'|\leq 6 \lambda f_m- 2|P'|\leq 6 \lambda (4s_m+2|P'|)-2|P'|\leq24 \lambda s_m +10\lambda|P'|\leq 44 \lambda s_m$.
\end{proof}

\section{Approximation algorithms for CPNCS}
\label{sec:approxCPNCS}

In this section we present several approximation algorithms for CPNCS. Let us first consider the ones whose approximation ratio is not output-dependent. The best algorithms in terms of approximation ratio are constant-approximation algorithms. The fastest algorithms take quadratic time. Therefore by 2.~of Theorem~\ref{thm:MCPEquivalentCPNCS}, all the algorithms we present for CPNCS can be used to obtain approximation algorithms for MCP and MCT with the same order of approximation ratio, and the same order of running time. We have also one algorithm for CPNCS which realises an $\mathcal{O}(\log OPT)$-approximation in time $\mathcal{O}(n^8)$, where $OPT$ denotes the minimum number of segments needed. Using 1.~and~2.~of Theorem~\ref{thm:MCPEquivalentCPNCS}, we also derive from it the $\mathcal{O}(\log OPT)$-approximation algorithm for MCP and MCT running in time $\mathcal{O}(n^8)$, where now $OPT$ denotes the minimum number of faces needed in a convex partition, or in a convex tiling, respectively. This is how we prove Theorem~\ref{thm:approxAlgo}.  We first present an easy approximation algorithm running relatively fast, at the cost of a high approximation ratio.

\begin{theorem}
There exists an $\sqrt{n} \log (n)$-approximation algorithm for CPNCS running in $\mathcal{O}(n^2)$-time.
\end{theorem}

\begin{proof}
 Let us denote by $\ell_m$ the minimum number of lines needed to cover $P$, as in the CPL problem. We denote by $s_m$ the minimum number of segments in a valid solution of CPNCS. We have $\ell_m \leq s_m$. Using the greedy algorithm for set cover problems, we can compute a covering of $P$ with $\ell$ lines, where $\ell \leq \log(n)\ell_m$~\cite{johnson1973approximation,lovasz1975ratio}. The greedy algorithm runs in quadratic time. Indeed, it is folklore that the greedy algorithm for covering a set $X$ with the family of subsets $\mathcal{F}\subseteq 2^X$ can be implemented in $\mathcal{O}(Z)$ time, where~$Z=\sum_{F\in \mathcal{F}}|F|$. In our situation, $Z$ is the number of point-line incidences, and so $Z=\mathcal{O}(n^2)$ because each point lies on at most $n-1$ lines. 
 
 We distinguish two cases, depending on the value of $\ell$. We denote by $\phi(n)$ a threshold function, that will be determined later. In the first case, we assume $\ell\geq \phi(n)$. In this situation, we cover each point in $P$ by a segment reduced to that point. We have $n$ segments, and we needed at least $s_m\geq \ell_m \geq \ell / \log(n)\geq \phi(n) / \log(n)$. The approximation ratio is $\frac{n \log(n)}{\phi(n)}$.
 
Now, let us assume $\ell < \phi(n)$. We transform each of the $\ell$ lines into a segment, such that the new segments still cover $P$. Now, at each of the $\mathcal{O}(n^2)$ intersections between the relative interior of a pair of segments, we split one segment into two, such that there is no crossing anymore. Let us denote by $s$ the number of segments obtained. We have $s\leq \ell^2\leq \ell \log (n) \ell_m\leq \phi(n) \log(n) s_m$. The approximation ratio is $\phi(n) \log(n)$.

We make the two approximation ratios equal by setting $\phi(n):=\sqrt{n}$. We obtain a $\sqrt{n} \log(n)$-approximation.
\end{proof}

Mitchell presented in a technical report some approximation algorithms for the problem of covering a point set with a minimum number of pairwise-disjoint triangles~\cite{mitchell1993approximation}. In his problem, the triangles of the covering must be subtriangles of some triangles given as input, for otherwise the problem would be trivial. He makes the assumption that no three points are on a line. We adapt his algorithms to our setting of CPNCS for point sets with no constraint. It seems that there were two mistakes in his proof, that we show how to fix. Let $P$ be a set of $n$ points. By doing a rotation if necessary, we can assume that no two points in $P$ have the same $x$-coordinate. We say that a trapezoid is \emph{constrained} if $1)$ it has two disjoint vertical sides, each lying on a line that contains a point in $P$, and $2)$ the two remaining sides are lying on lines that contain each at least two points in $P$.  Note that there are $\mathcal{O}(n^6)$ constrained trapezoids.

In his paper, Mitchell calls the trapezoids ``canonical'' instead of ``constrained''~\cite{mitchell1993approximation}. We make the choice of changing the name for better clarity later. Also, concerning constraint $1)$, he has the stronger constraint that the vertical sides must each contain a point in $P$. It seems to be a mistake, for otherwise it is not clear how his dynamic programming algorithms work, and some of his arguments do not hold. Anyway, even with his definition, he only uses the fact that there are $\mathcal{O}(n^6)$ constrained trapezoids for computing the running time of his algorithms. Therefore there is no loss in using our definition.

We also allow for some degeneracies. Let us consider a triangle with vertices $a$, $b$ and $c$, not all three on a line. If $a$ is in $P$, the segment with endpoints $b,c$ is vertical and lies on a line that contains a point in $P$, and the segments with endpoints $a,b$ and $a,c$ respectively are contained in some lines $\ell$ and $\ell'$ such that $\ell$ and $\ell'$ contains at least two points in $P$, then we say that the triangle is a constrained trapezoid. If a constrained trapezoid is split into two halves by a vertical line $\ell$ going through its interior, with $\ell$ containing a point in $P$, we obtain two constrained trapezoids. Likewise, if a segment $s$ is in a constrained trapezoid $\tau$, such that $s$ lies on a line that contains at least two points in $P$, $s$ intersects the interior of $\tau$, and the endpoints of $s$ are contained in the vertical sides of $\tau$, then $s$ splits $\tau$ into two constrained trapezoids.

For a set of points $P$ where no two points have the same $x$-coordinate, we define the \emph{enclosing trapezoid} as follows. Let $\ell_1$ be the vertical line that contains the leftmost point in $P$, and let $\ell_2$ be the vertical line that contains the rightmost point in $P$. Let $L$ be the set of all lines containing at least two points in $P$. Observe that no line in $L$ is vertical. We denote by $a$ the highest intersection point between $\ell_1$ and a line in $L$. We denote by $b$ the lowest point intersection point between $\ell_1$ and a line in $L$. Similarly, we denote by $c$ and $d$, respectively, the highest intersection point, respectively the lowest intersection point, between $\ell_2$ and a line in $L$. We denote by $\ell_3$ the line containing $a$ and $c$, and by $\ell_4$ the line containing $b$ and $d$. The \emph{enclosing trapezoid} of $P$ is the constrained trapezoid of $P\cup\{a,b,c,d\}$ defined by $\ell_1$, $\ell_2$, $\ell_3$ and $\ell_4$. It is denoted by $\mathcal{T}_P$.



Mitchell uses in his paper the notion of \emph{guillotine property}. He shows that if there is a covering of the points in $P$ with $s$ elements, then there is a covering of $P$ with at most $\mathcal{O}(s \log s)$ elements having the guillotine property. He then presents an algorithm, and claims that it outputs an optimal solution among all coverings that have the guillotine property. Although we agree that his algorithm outputs a solution with at most $\mathcal{O}(s \log s)$ elements, we present a counterexample to the fact that his algorithm outputs an optimal solution among all coverings that have the guillotine property. Although he considers the problem of covering points with triangle, he reduces the problem to covering a set of points with constrained trapezoids. He defines the strong guillotine property for trapezoids as follows: A set $\mathcal{T}$ of constrained trapezoids has the guillotine property if $a)$ it contains at most one trapezoid, or if $b)$ there exists a partitioning line $\ell$ containing at least two points in $P$ not intersecting the interior of any constrained trapezoid in $\mathcal{T}$, such that the sets of constrained trapezoids on both sides of $\ell$ also have the guillotine property, or if $c)$ there exists a vertical partitioning line $\ell$ not intersecting the interior of any constrained trapezoid in $\mathcal{T}$, such that the sets of constrained trapezoids on both sides of $\ell$ are not empty also have the guillotine property~\cite{mitchell1993approximation}. Mitchell's wording is not exactly the same as ours but the two definitions are equivalent. 

\begin{figure}
    \centering
    \begin{tikzpicture}[scale=0.8]
    
\draw[fill, red, opacity=0.3] (-16,2) -- (-13,8) -- (-13,8.72) -- (-16,9) -- cycle;
\draw[fill, red, opacity=0.3] (-5,4) -- (-14.67,4) -- (-5,4.72) -- cycle;
\draw  (-16,9)-- (-16,2);
\draw  (-16,2)-- (-5,4);
\draw (-16,9)-- (-5,8);
\draw  (-5,8)-- (-5,4);

\node at (-15,8.91) {\textbullet};
\node at (-16,2) {\textbullet};
\node at (-5,4)  {\textbullet};
\node at (-13,8) {\textbullet};
\node at (-14,8.82) {\textbullet};
\node at (-14.67,4) {\textbullet};
\node at (-6,4.65) {\textbullet};

\draw[above] (-15,8.91) node[] {$b$};
\draw[left] (-16,2) node[] {$a$};
\draw[right] (-5,4) node[] {$g$};
\draw[right] (-13,8) node[] {$e$};
\draw[above] (-14,8.82) node[] {$d$};
\draw[above] (-14.67,4) node[] {$c$};
\draw[above] (-6,4.65) node[] {$f$};

\draw[] (-8,8.7) node[] {$\mathcal{T}_1$};
\draw[] (-14.5,7) node[] {$\mathcal{T}_2$};
\draw[] (-8,4.2) node[] {$\mathcal{T}_3$};
    \end{tikzpicture}
    \caption{The constrained trapezoids $\mathcal{T}_2$ and $\mathcal{T}_3$ have the guillotine property, and cover the points in $P$. Mitchell's algorithms applied to $\mathcal{T}_1$ outputs a solution with three trapezoids and is therefore not optimal.}
    \label{fig:counterExample}
\end{figure}

In Figure~\ref{fig:counterExample} we have represented three constrained trapezoids, with $\mathcal{T}_3$ being reduced to a triangle. Observe that $\mathcal{T}_2$ and $\mathcal{T}_3$ have the guillotine property. Indeed the line going through $a$ and $c$ realises condition $b)$ of the guillotine property. Let us now apply Mitchell's algorithm to the constrained trapezoid $\mathcal{T}_1$. In Mitchell's setting, not all constrained trapezoids can be used to cover the points in $P$: they must be subtrapezoids of some given trapezoids. Here the two given trapezoids are $\mathcal{T}_2$ and $\mathcal{T}_3$. Observe that a minimum covering of $P$ uses $\mathcal{T}_2$ and $\mathcal{T}_3$. We claim that Mitchell's algorithm outputs at least three trapezoids. His algorithm recurses on all the ways of splitting $P$ with a vertical line. Observe that any vertical line going through a point in $P$ intersects the interior of $\mathcal{T}_2$ or $\mathcal{T}_3$, and therefore cannot lead to a solution with two trapezoids. In addition, his algorithm recurses on all the ways of splitting $P$ with a segment that contains at least two points in $P$, and whose endpoints are on the vertical sides of $\mathcal{T}_1$. Observe that any such segment $\sigma$ splits the interior of $\mathcal{T}_2$ or $\mathcal{T}_3$. Moreover, the points contained in the trapezoid that is split by $\sigma$ are not on the same side of $\sigma$. This implies that all those recursions will lead to solutions with at least three trapezoids. Figure~\ref{fig:counterExample} thus depicts a counterexample to the fact that Mitchell's algorithm outputs an optimal solution among the ones that have the guillotine property. The reason for that is that the line $\ell$ going through $a$ and $e$, and the line $\ell'$ going through $a$ and $c$, which are the certificates that $\mathcal{T}_2$ and $\mathcal{T}_3$ have the guillotine property, do not intersect $\mathcal{T}_1$ only at its vertical sides. Therefore splitting along $\ell$ and $\ell'$ is not tested by Mitchell's algorithm.

We define the \emph{strong guillotine property} in the special case of segments. We show that if there is a covering of $P$ with $s$ non-crossing segments, then there is a covering of $S$ with $\mathcal{O}(s \log s)$ non-crossing segments having the strong guillotine property. We then present an algorithm that outputs an optimal solution among all the coverings with non-crossing segments having the strong guillotine property. Let $S$ be a set of non-crossing segments covering $P$. We assume that the endpoints of the segments in $S$ are in $P$. We say that $S$ has the strong guillotine property with respect to a constrained trapezoid $\mathcal{T}$ that contains all segments in $S$ if $a)$ $S$ contains at most one segment, or if $b)$ there exists a partitioning line $\ell$ containing at least two points in $P$ and at least one segment in $S$, such that for any segment $s\in S$, $\ell$ either contains $s$ or does not intersect the relative interior of $s$, and $\ell$ splits $\mathcal{T}$ into two constrained trapezoids $\mathcal{T}_1$ and $\mathcal{T}_2$, such that the segments in $\mathcal{T}_1$, respectively $\mathcal{T}_2$, have the strong guillotine property with respect to $\mathcal{T}_1$, respectively $\mathcal{T}_2$, or if $c)$ there exists a vertical line not intersecting with the relative interior of any segment in $S$, that splits $\mathcal{T}$ into two constrained trapezoids $\mathcal{T}_1$ and $\mathcal{T}_2$, such that the segments in $\mathcal{T}_1$, respectively $\mathcal{T}_2$, have the strong guillotine property with respect to $\mathcal{T}_1$, respectively $\mathcal{T}_2$. Observe that the line $\ell$ in case $b)$ only intersects the vertical sides of $\mathcal{T}$, for otherwise $\ell$ would not split $\mathcal{T}$ into constrained trapezoids. We simply say that $S$ has the strong guillotine property if it has the strong guillotine property with respect to the enclosing trapezoid $\mathcal{T}_P$.



\begin{lemma}\label{lemma:coverBSP}
If there exists a covering of $P$ with $s$ non-crossing segments, then there exists a covering of $P$ with $\mathcal{O}(s \log(s))$ non-crossing segments with the strong guillotine property.
\end{lemma}

\begin{proof}
Recall that we assume that the endpoints of the segments are in $P$, by cropping them if need be. We can even crop some segments further such that they are pairwise-disjoint (it may be that now some segments are reduced to points). Consider the endpoints of the segments in that covering, that we denote by $P'$. We denote $|P'|$ by $n$, and we have $n\leq 2s$. Note that no two points in $P'$ have the same $x$-coordinate. We denote by $X$ the set of $x$-coordinates of the points in $P'$. We now consider the segment tree based on $X$, as defined in~\cite{preparata1985computational}. The segment tree defines some canonical intervals. Each interval whose endpoints are in $X$ is partitioned into $\mathcal{O}(\log s)$ canonical intervals. We partition each segment in the covering, such that the projection on the $x$-axis of each new segment is a canonical interval. Therefore we obtain a covering of $P$ with $\mathcal{O}(s \log(s))$ non-crossing segments. We claim that this family of segments has the strong guillotine property. Let us denote by $x_i$, $1\leq i \leq n$ the elements in $X$, ordered by increasing value. We distinguish two cases. If there exists a segment $\sigma$ whose projection on the $x$-axis is equal to the interval $[x_1,x_n]$, then we recurse on the parts above and below $\sigma$ which contain some segments. Observe that if $n=2$ we are done. If there is no such segment, then by definition of a segment tree, there is no segment in the covering whose relative interior intersects the vertical line $\ell$ with $x$-coordinate equal to $x_{\lfloor(1+n)/2 \rfloor}$. Thus we can recurse on the left and right side of $\ell$.
\end{proof}

\begin{theorem}\label{thm:logOPTapprox}
There exists an $\mathcal{O}(\log(OPT))$-approximation algorithm running in $\mathcal{O}(n^8)$-time for CPNCS.
\end{theorem}

\begin{proof}
We explain how to recursively compute a minimum covering of $P$ with non-crossing segments under the constraint that the solution has the strong guillotine property. The approximation ratio for the CPNCS problem when this additional constraint is removed follows from Lemma~\ref{lemma:coverBSP}. If $P$ is empty, we return no segment, which is a valid solution. If $P$ can be covered with a single segment, we return that segment. This can be tested in $\mathcal{O}(n^2)$ time using duality. Now let us assume that not all points in $P$ are on a line. We compute the enclosing trapezoid $\mathcal{T}_P$ of $P$. We consider the four vertices $a,b,c,d$ of $\mathcal{T_P}$. We start by adding the segment with endpoints $a,c$, and the segment with endpoints $b,d$. Now all the points to cover are within the enclosing trapezoid $\mathcal{T}_P$. We distinguish two cases, according to whether a segment with endpoints on the vertical sides of $\mathcal{T}_P$ is in a minimum covering with non-crossing segments having the strong guillotine property. If it is, we can add it to the solution and recurse on the two new constrained trapezoids. If no such segment is part of a minimum solution, then there exists a vertical line $\ell$ that splits a minimum solution into two parts, such that $\ell$ does not intersect the relative interior of any segment in that minimum solution. We can recurse on the $\mathcal{O}(n)$ choices of splitting vertically the constrained trapezoid into two constrained trapezoids. For each of the $\mathcal{O}(n^2)$ recursions, we compute the number of segments corresponding to that solution, and we output the solution corresponding to the one that minimises the number of segments.

To optimise we can do dynamic programming, and solve first the thinnest constrained trapezoids (in terms of width on the $x$-axis). There are $\mathcal{O}(n^6)$ constrained trapezoids, and we take quadratic time for each of them, so the total running time is $\mathcal{O}(n^8)$.
\end{proof}

It is possible to use the segment tree technique for the computation as done by Mitchell. It reduces the running time to $\mathcal{O}(n^7)$ at the cost of a slightly worse approximation ratio.

\begin{theorem}
There exists an $\mathcal{O}(\log(n))$-approximation algorithm running in $\mathcal{O}(n^7)$-time for CPNCS.
\end{theorem}

\begin{proof}
We consider the set $X$ of $x$-coordinates of points in $P$. We compute the corresponding segment tree in $\mathcal{O}(n \log (n))$-time. Let us consider a minimum covering of $P$ with $s$ non-crossing segments. We crop the segments so that their endpoints are in $P$. We partition each segment in the covering such that the projection of each new segment on the $x$-axis is a canonical interval of the segment tree. We say that a segment is \emph{canonical} if its projection on the $x$-axis is a canonical interval. We observe that there is a covering of $P$ with $\mathcal{O}(s \log(n))$ non-crossing canonical segments. Thus, we can adapt the algorithm of Theorem~\ref{thm:logOPTapprox} to obtain an $\mathcal{O}(\log(n))$-approximation, by outputing a minimum covering of $P$ with non-crossing canonical segments.

We call a constrained trapezoid whose projection on the $x$-axis is a canonical interval a \emph{canonical trapezoid} (note that this definition is not the same as Mitchell's). As there are $\mathcal{O}(n)$ canonical intervals, there are $\mathcal{O}(n^5)$ canonical trapezoids. We do as in the algorithm of Theorem~\ref{thm:logOPTapprox}. The difference is that when we assume that there exists a vertical line $\ell$ that splits $P$, with $\ell$ not intersecting the relative interior of any segment in an optimal solution, we can assume that the $x$-coordinate of $\ell$ is equal to the median of $X$. For each canonical trapezoid we still do $\mathcal{O}(n^2)$ recursions, so the overall running time of the algorithm is $\mathcal{O}(n^7)$.
\end{proof}


We say that a point set $P$ is \emph{$k$-directed} if there exists a set $D$ of $k$ directions, such that for any line $\ell$ that contains at least three points in $P$, the direction of $\ell$ is in $D$. Without loss of generality, we can assume that the two directions of a $2$-directed point set $P$ are vertical and horizontal. Indeed, this has no impact for the CPNCS problem. For convenience, for any set of directions $D$ and any segment $s$ reduced to a point, we say that the direction of $s$ is in $D$. We say that a set of segments $S$ has the \emph{autopartition property} if $|S|=1$, or if there exists a line $\ell$ which contains at least one segment in $S$, and splits $S$ into two sets that are either empty or have the autopartition property. The relative interior of a segment in $S$ is either contained in $\ell$ or does not intersect $\ell$.

\begin{lemma}\label{lemma:kDirected}
Let $P$ be a $k$-directed point set with set of directions $D$. If there exists a covering of $P$ with $s$ non-crossing segments, then there exists a covering of $P$ with $\mathcal{O}(sk)$ non-crossing segments having the autopartition property, such that the direction of each segment is in $D$. If $k=2$ then there exists a covering with at most $4s$ non-crossing segments, being vertical or horizontal, having the autopartition property.
\end{lemma}

\begin{proof}
Let $D$ be the set of $k$ directions, such that the direction of any line that contains at least three points in $P$ belongs to $D$. From the covering with $s$ segments, we can obtain a covering with at most $2s$ segments such that the direction of each segment is in $D$. Indeed, a segment in the covering whose direction is not in $D$ contains at most two points in $P$. We crop some segments if necessary such that no two segments intersect, and they still cover $P$. Now, we use a theorem by T{\'o}th who showed that any set of $s'$ disjoint segments having up to $k$ directions have an autopartition of size $\mathcal{O}(s'k)$~\cite{toth2003binary}. This immediately implies the result. 

Let us now assume $k=2$. Let us consider a set of $s'$ segments that are vertical or horizontal. There exists a partition of the segments that contains at most $2s'$ segments, and which has the autopartition property~\cite{dumitrescu2004binary}. An upper bound of $3s'$ was first shown by Paterson and Yao~\cite{paterson1992optimal}. It was then improved to $2s'$ by d'Amore and Franciosa~\cite{d1992optimal}, although not explicitly. Dumitrescu, Mitchell and Sharir made the result explicit later~\cite{dumitrescu2004binary}. 
\end{proof}

\begin{theorem}\label{thm:kDirected}
There exists an $\mathcal{O}(k)$-approximation algorithm for CPNCS in $k$-directed sets running in $n^{\mathcal{O}(k)}$. Furthermore, there exists a $4$-approximation algorithm for CPNCS in $2$-directed sets running in time $\mathcal{O}(n^5)$.
\end{theorem}

\begin{proof}
Let $P$ be a $k$-directed point set with set of directions $D$. We show how to compute an optimal covering of $P$ with non-crossing segments, such that the solution has the autopartition property, and the direction of each segment in the solution is in $D$. Lemma~\ref{lemma:kDirected} immediately implies that our algorithm realises an $\mathcal{O}(k)$-approximation, and a $4$-approximation for the special case $k=2$.

The recursion of the algorithm is as follows. If there exists a segment that contains all points in $P$, we add it to the solution. This can be tested in time $\mathcal{O}(n^2)$. Otherwise, for each direction $\delta$ in $D$, we recurse in the $\mathcal{O}(n)$ ways of splitting $P$ with a line $\ell$, such that the direction of $\ell$ is $\delta$, and $\ell$ contains at least a point in $P$. We add to the solution the shortest segment containing the points in $\ell$, and we recurse on both sides of $\ell$.

The subsets of $\mathbb{R}^2$ we are considering in the recursion are defined by giving for each direction the two extreme points for that direction. For each point we have $\mathcal{O}(n)$ choices, so in total there are $n^{2k}$ of such subsets. For each subset by using dynamic programming, we need $\mathcal{O}(nk)$-time. Thus, the total running time is $n^{\mathcal{O}(k)}$, and simply $\mathcal{O}(n^5)$ for the special case $k=2$.
\end{proof}

A natural question is whether we can use the autopartition property when the number of directions is not fixed. It is known that a set of $s$ pairwise-disjoint segments allows for an autopartition with $\mathcal{O}(s \log s / \log \log s)$ segments~\cite{toth2011binary}. This is tight~\cite{t2001note}. Using the techniques we have presented, one could hope to obtain an $\mathcal{O}(\log OPT / \log \log OPT)$-approximation algorithm. However, this autopartition might not have any good structure, and so we cannot use dynamic programming because there are too many subsets of $\mathbb{R}^2$ to consider. In any case, because of the tightness on the number of segments in the autopartition, it seems that the autopartition technique cannot be used to obtain constant-approximation algorithms.

\section{Fixed-parameter algorithm for CPNCS}
\label{sec:FPT}
As mentioned in the introduction, there are known fixed-parameter algorithms for MCP, where the parameter is the number of inner points. We present here a fixed-parameter constant-approximation algorithm for MCP and MCT, where the parameter is the number of faces in a minimum convex partition or a minimum convex tiling, respectively. For point sets where no three points are on a line, the minimum number of convex faces is at least half the number of inner points~\cite{knauer2006approximation}, and the number of convex tiles is at list a sixth of the number of inner points~\cite{dumitrescu2012minimum}. However, as shown in Figure~\ref{fig:noLowerBound}, when we allow for several points to be on a line, the number of inner points can be arbitrarily larger than the number of convex faces in a minimum convex partition. If the number of inner points is significantly higher than the number of convex faces needed, our algorithm has a lower running time. We first show that CPNCS is in FPT.

\begin{theorem}\label{thm:FPTcoveringSegments}
It is possible to compute in time $\mathcal{O}(2^{k^2}k^{7k}+n^4 \log n)$ whether a point set $P$ can be covered with at most $k$ non-crossing segments, and to output such a covering if it exists. 
\end{theorem}

The proof uses a kernelisation technique presented by Langerman and Morin for CPL~\cite{langerman2005covering}. Assume there is a line $\ell$ that contains at least $k+1$ points in $P$. Then in any covering of $P$ with at most $k$ lines, $\ell$ must be in the covering. Otherwise, we would need at least $k+1$ lines to cover the points contained in $\ell$. Now one can compute all of these lines that contain at least $k+1$ points, dismiss all of the covered points, until no line covers more than $k$ of the remaining points. If there remains more than $k^2$ points, then there is no covering of the point set with at most $k$ lines. Otherwise, one can compute every way of covering the $\mathcal{O}(k^2)$ remaining points, and check whether there is one that uses in total at most $k$ lines. In our setting, we are looking for a covering with non-crossing segments, which makes it more difficult. Indeed, if a line $\ell$ contains at least $k+1$ points, we only know that $\ell$ must contain at least one segment of the covering. This means that we cannot simply dismiss the points covered by such a line. Also, we have to be careful about crossings. Before proving Theorem~\ref{thm:FPTcoveringSegments}, we first show several lemmas. For a point set $P$, we say that a segment $s$ is a \emph{$P$-segment} if its endpoints are in $P$. Recall that we only consider coverings of a point set $P$ with non-crossing $P$-segments.

\begin{definition}\normalfont
Let $P$ be a point set, and let $s$ and $t$ be two crossing $P$-segments. We denote by $p$ the intersection of $s$ and $t$. We determine four points in $P$, that we call the \emph{points enclosing $p$}. There are two points on $s\cap P$ and two points on $t\cap P$. The two points on $s\cap P$, denoted by $u$ and $v$, are such that the segment with endpoints $u$ and $v$, which we denote by $uv$, is the shortest $P$-segment contained in $s$ whose relative interior contains $p$. Likewise, the two points $u'$ and $v'$ are such that $u'v'$ is the shortest $P$-segment contained in $t$ whose relative interior contains $p$. The points $u$, $v$, $u'$ and $v'$ are the points enclosing $p$.
\end{definition}

\begin{lemma}\label{lemma:FPTpreprocess}
Given a set $P$ of $n$ points, it is possible to compute in time $\mathcal{O}(n^4 \log n)$ the pairs of crossing $P$-segments, to find whether their intersection $p$ is in $P$, and to store the points enclosing $p$. Additionally, we can also store for each $P$-segment how many points in $P$ they contain, and the list of those points.
\end{lemma}

\begin{proof}
Let $s$ be a $P$-segment. We first store the number of points contained in $s$, as well as the list consisting of those points. We then sort the list so that when going from one endpoint of $s$ to the other, the points appear consecutively on the list. As there are $\mathcal{O}(n^2)$ of such segments, this preprocessing can be done in time $\mathcal{O}(n^3\log n)$. Let $s$ and $s'$ be some crossing $P$-segments. There are $\mathcal{O}(n^4)$ pairs of such segments. We denote by $p$ the intersection of $s$ and $s'$. We check whether $p$ is in $P$. This can be done in time $\mathcal{O}(\log n)$ by searching through the list of points in $s$. We denote by $u$, $v$, $u'$ and $v'$ the points enclosing $p$. Observe that given $s$ and $s'$, it takes time $\mathcal{O}(\log n)$ to find the points enclosing $p$, and to test whether $p$ is in $P$. Thus when considering all pairs of segments, we can compute this information in time $\mathcal{O}(n^4 \log n)$, and so given the endpoints of some crossing $P$-segments $s$ and $s'$, we can retrieve this information in constant time. Thus, the total running time of the algorithm is in $\mathcal{O}(n^4 \log n)$.
\end{proof}

\begin{lemma}\label{lemma:FPTcandidate}
Given a set $P$ of $n$ points, and a natural number $k$, it is possible to find in time $\mathcal{O}(2^{k^2}+n^4 \log n)$ either a certificate that there is no covering of $P$ with at most $k$ non-crossing segments, or to output a family $\mathcal{F}$ of $\mathcal{O}(2^{k^2})$ sets $S$ containing at most $k$ non-crossing $P$-segments, with the following properties: For any fixed covering of $P$ with at most $k$ non-crossing $P$-segments, there exists a set $S$ in $\mathcal{F}$ such that a) any segment $s\in S$ contains at least $k+1$ points in $P$, b) for each segment $t$ of the covering, if $|P\cap t \cap s|\geq 2$ for some $s\in S$, then $t$ is contained in $s$, and c) if a segment of the covering contains at least $k+1$ points in $P$, then it is contained in a segment in $S$.
\end{lemma}

Let $P$ be a point set and let $k$ be a natural number. Observe that if a set $S$ of segments satisfies property a), then in a covering with at most $k$ segments of $P$, each segment $s$ in $S$ contains at least one segment $t$ of the covering, such that $|P\cap t|\geq 2$. Indeed if there exists a segment $s\in S$ such that for any segment $t$ in the covering, we have that $s \cap t$ contains at most one point in $P$, then at least $k+1$ segments are needed to cover the points in $P\cap s$. This implies that if $S$ consists of $m$ segments and satisfies properties a) and b), then there are at least $m$ segments in the considered covering of $P$ with non-crossing segments.

\begin{proof}[Proof of Lemma~\ref{lemma:FPTcandidate}]
We first do some preprocessing by using the algorithm of Lemma~\ref{lemma:FPTpreprocess}. This takes $\mathcal{O}(n^4 \log n)$ time. We create a list $L$ of segments, which at the beginning is empty, and will contain the segments in $S$ when we are done. For each line $\ell$ that contains at least $k+1$ points, we find the extremal points $p$ and $q$ of $P$ contained in $\ell$ in time $\mathcal{O}(n)$. Then we add the line segment with endpoints $p$ and $q$ to $L$. Using the algorithm presented by Guibas {\em et al.}~\cite{guibas1996exact}, we can compute all lines containing more than $k$ points in time $\mathcal{O}(\frac{n^2}{k}\log(\frac{n}{k}))$. If there are more than $k$ of such lines, we already know that there is no covering of $P$ with at most $k$ non-crossing segments of $P$. Indeed such a covering can only exist if there exists a covering of $P$ with at most $k$ lines. Let us now assume that there are at most $k$ such lines. We add all corresponding segments to $L$ in total time $\mathcal{O}(kn+\frac{n^2}{k}\log(\frac{n}{k}))$. Let us show that the segments in $L$ satisfy properties a), b) and c), although they might still be crossing. First, property a) holds by definition. Moreover property b) holds for all covering of $P$ with at most $k$ segments because a segment in $L$ containing points $p$ and $q$ also contains all points on the line $(p,q)$. Finally, property c) also holds trivially for all covering of $P$ with at most $k$ segments. 

We are now going to modify $L$ and make copies of it while maintaining the fact that properties a), b) and c) hold. Our aim is that no two segments in $L$ cross. Let us consider one segment $s$ in $L$ which is crossed by another segment $s'$ in $L$. We denote by $p$ the intersection of $s$ and $s'$. We retrieve the points $u$ and $v$ such that $uv$ is the shortest $P$-segment in $s$ whose relative interior contains $p$. We do likewise with $u'$ and $v'$ in $s'$. Observe that not both $uv$ and $u'v'$ can be in a covering of $P$ with non-crossing segments. More generally, in a valid covering, at least one of $uv$ and $u'v'$ is not contained in any segment of the covering. We create one copy of $L$, and recurse on two cases, one where we assume that $uv$ is not contained in a segment of the covering, and one where we assume that $u'v'$ is not contained in a segment of the covering. In the case where we assume that $uv$ is not in the covering, $s'$ remains in $L$, and might still be removed at a later step. Let us assume for now that $uv$ is not contained in a segment of the covering. We remove $s$ from $L$. The candidate segment $s'$ splits $s$ at $p$ into two sides. Let us denote by $x$ and $y$ the endpoints of $s$, with $u$ being closer to $x$ than $v$ is. If $p$ is not in $P$, we consider the segments $xu$ and $vy$. If $p$ is in $P$, we consider the segments $xp$ and $py$. Any of the two new segments that contains more than $k$ points in $P$ is added to $L$. Indeed property a) holds by definition. Moreover property b) holds because $s$ was in $L$, and we are assuming that the segment $uv$ is not contained in a segment of the covering. If a segment contains at most $k$ points, we do not add it to $L$. We claim that property c) still holds. This is because if a point $q\in P$ which lies on a line that contains more than $k$ points is not contained in some segment in $L$, that means that if a segment $t$ contains $q$ as well as at least $k$ other points in $P$, then $t$ also contains some segment which we are assuming not to be contained in a covering. 

If we obtain more than $k$ segments in $L$, we stop this branch of the recursion, as we already know that there is no valid covering of $P$ with at most $k$ segments, assuming that $uv$ is not contained in a segment of the covering. We now iterate over all crossing segments in $L$. We obtain $O(k)$ segments in $L$, which are by construction non-crossing. As we stop after $\mathcal{O}(k^2)$ recursions, the number of leaves of the recursion we consider is in $\mathcal{O}(2^{k^2})$. We would like to say that each recursion implies the existence of one more segment in a covering with non-crossing segments, but this is a priori not the case. Therefore, if the number of lines containing more than $k$ points is in $\Omega(k)$, we might have to do $\Omega(k^2)$ recursions. We can do the computation in total time $\mathcal{O}(2^{k^2}+kn+\frac{n^2}{k}\log(\frac{n}{k}))$, using the information we preprocessed. If we add to it the running time of the preprocessing, the total running time of the algorithm is in $\mathcal{O}(2^{k^2}+n^4 \log n)$.
\end{proof}

\begin{proof}[Proof of Theorem~\ref{thm:FPTcoveringSegments}]
We first use the algorithm of Lemma~\ref{lemma:FPTcandidate}. In particular, we keep the information that is preprocessed with the algorithm of Lemma~\ref{lemma:FPTpreprocess}. If we have a certificate that there is no covering of $P$ with at most $k$ non-crossing segments, we stop. Let us assume that the algorithm outputs a family $\mathcal{F}$ of $\mathcal{O}(2^{k^2})$ lists $L$, such that $\mathcal{F}$ satisfies the conditions of Lemma~\ref{lemma:FPTpreprocess}. Let us consider a fix covering of $P$ with at most $k$ non-crossing segments, assuming one exists. We guess the corresponding list of segments $L$ in time $\mathcal{O}(2^{k^2})$. We call the segments in $L$ \emph{candidate segments}.

Let us denote by $m$ the number of candidate segments, and by $m'$ the number of points not contained in a candidate segment. Computing $m$ and $m'$ takes $\mathcal{O}(n)$ time. If $m'$ is larger than $k^2$, we output that there is no solution. Indeed, by property c), a segment in the covering can contain at most $k$ points which are not contained in some candidate segment. If $m+m'$ is at most $k$, then we take the covering consisting of all candidate segments, and segments reduced to a point for each point which is currently uncovered. Otherwise, for a covering to have at most $k$ non-crossing segments, there must be a segment that contains at least two points which are currently not covered by candidate segments. Indeed by properties a) and b) we know that for each candidate segment $s$, there exists a segment $t$ in the covering that is contained in $s$, and therefore $t$ contains no point currently uncovered. Thus to have a covering with fewer than $m+m'$ segments, there must be a segment $\sigma$ in the covering which contains at least two points for now uncovered. Recall that $m'$ is at most $k^2$. We consider the $\mathcal{O}(k^4)$ lines going through at least two uncovered points. As we have shown, there exists a line $\ell$ that contains two uncovered points, and also contains $\sigma$. Observe that the endpoints of $\sigma$ might be contained in some candidate segments. We first guess in $\mathcal{O}(k^4)$ time the largest segment $\tau$ contained in $\sigma$, such that the endpoints of $\tau$ are uncovered points. If $\ell$ intersects a candidate segment at a point $p \in P$, such that $p$ is not in $\tau$, we want to guess whether $p$ is an endpoint of $\sigma$. As there are at most $k$ candidate segments, given $\tau$, we can guess the endpoints of $\sigma$ in $\mathcal{O}(k^2)$ time. Therefore, we can guess $\sigma$ in $\mathcal{O}(k^6)$ time. We find in $\mathcal{O}(k)$ time the list of candidate segments $\sigma$ intersects, and also check that $\sigma$ does not intersect any segment that we have already taken in the solution during a past iteration. For each candidate segment $s$ that $\sigma$ intersect, we do as in the algorithm of Lemma~\ref{lemma:FPTpreprocess} and we split $s$ into two sides. We also update $m$ and $m'$. For a specific candidate segment, this can be done in constant time thanks to the preprocessing. We now iterate from the beginning of the paragraph. At each iteration, we are either done, or we have one more segment in our partial covering. Therefore we iterate at most $k$ times. The total running time of this algorithm (not including the preprocessing) is in $\mathcal{O}(k^{7k})$. The total running time of the algorithm is in $\mathcal{O}(2^{k^2}k^{7k}+n^4 \log n)$.
\end{proof}

\begin{theorem}
It is possible to compute in time $\mathcal{O}(2^{36f^2}f^{42f+1}+n^4 \log n)$ a convex partition of a point set $P$ with at most $24f$ convex faces, where $f$ denotes the minimum number of convex faces required. The same holds when considering convex tilings.
\end{theorem}

\begin{proof}
We first compute a minimum covering of the inner points in time $\mathcal{O}(2^{s^2}s^{7s+1}+n^4 \log n)$ by applying the algorithm of Theorem~\ref{thm:FPTcoveringSegments} for $k=1,2, \dots, s$, where $s$ denotes the minimum number of segments required in a covering of the inner points. Then, by 2. of Theorem~\ref{thm:MCPEquivalentCPNCS}, we obtain in $\mathcal{O}(n^2)$-time a convex partition with at most $24f$ convex faces. The same holds with convex tilings for the same arguments. As by 1. of Theorem~\ref{thm:MCPEquivalentCPNCS}, we have $s\leq 6f$, the total running time of the algorithm is as stated. 
\end{proof}

There is a strong similarity between CPNCS and Maximum Independent Set in Segment Intersection Graphs~(MISSIG). As an example, we show in Section~\ref{sec:NPhardCPNCS} that CPNCS is NP-hard by doing a reduction from MISSIG. We have shown that CPNCS is in FPT, but Marx has shown that MISSIG is W[1]-hard~\cite{marx2006parameterized}. We do a sanity check and explain why his hardness reduction does not apply to CPNCS. In his reduction, there are $f(k)$ gadgets, each gadget containing $\mathcal{O}(n^2)$ segments. In each gadget, a constant number of segments has to be taken in an independent set of size $k$. We could try to mimic our NP-hardness reduction of Section~\ref{sec:NPhardCPNCS}, take the same set of segments as Marx, and then replace each segment by a set of four collinear points. Taking a segment in the independent set corresponds to covering these four points with one segment. If a segment is not taken in the independent set, then we need two segments to cover the four points. Therefore, one needs $\Omega(n)$ segments in each gadget to cover the points, and not some constant number, which implies that the W[1]-hardness reduction we are trying to do is not valid.

We now discuss why the FPT algorithm and the techniques presented for CPNCS do not give us an FPT algorithm for MCP where the parameter is the number of faces. One can first use Lemma~\ref{lemma:FPTcandidate} to guess some candidate segments. Then it is possible to guess how many vertices of degree at least $3$ lie on each of the $\mathcal{O}(k)$ candidate segments. Then we can enumerate all plane graphs on this vertex set, and guess which one corresponds to our convex partition. However, for now we have only guessed on which candidate segment does a vertex lie. It remains to check whether all those points can be placed at points in $P$, while preserving the fact that the graph is a convex partition. This can be modelled as an integer linear programming problem, but as the number of constraints is linear in $n$, this does not give an FPT algorithm.

\section{NP-hardness of MCP and MCT}
\label{sec:NPMCP}

Our proof builds upon gadgets introduced by Lingas~\cite{lingas1982power}. He used them to prove NP-hardness of several decision problems, including \emph{Minimum Convex Partition for Polygons with Holes} and \emph{Minimum Rectangular Partition for Rectilinear Polygons with Holes}. In the second problem, Steiner points are allowed. 
However, as noted by Keil~\cite{keil1985decomposing}, one can easily adapt Lingas' proof to not use Steiner points. We use a similar idea to prove NP-hardness of the MCP problem. 
Lingas' proofs for the two problems are similar, and consist in a reduction from the following variation of planar $3$-SAT. The instances are a CNF formula $F$ with set of variables $X$ and set of clauses $C$, and a planar bipartite graph $G=(X \cup C,E)$, such that there is an edge between a variable $x\in X$ and a clause $c \in C$ if and only if $x$ or $\bar{x}$ is a literal of $c$. Moreover, each clause contains either two literals or three, and if it contains three, the clause must contain at least one positive and one negative literal. Lingas refers to this decision problem as the \emph{Modified Planar $3$-SAT} (MPLSAT). Lingas claims that planar satisfiability can easily be reduced to MPLSAT by adding new variables. As planar satisfiability was shown to be NP-complete by Lichtenstein~\cite{lichtenstein1982planar}, this would imply that MPLSAT is NP-complete too. For the sake of completeness, we remark that it is not clear why adding these new variables would not break the planarity of the graph. This can be solved by considering the following lemma of Lichtenstein:

\begin{lemma}[Lichtenstein~\cite{lichtenstein1982planar}]
Planar satisfiability is still NP-complete even when, at every variable node, all the arcs representing positive instances of the variable are incident to one side of the node and all the arcs representing negative instances are incident to the other side.
\end{lemma}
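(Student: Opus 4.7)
The plan is to reduce standard planar satisfiability (shown NP-complete by Lichtenstein) to the side-constrained variant by locally modifying each variable node. Given a planar SAT instance $(F,G)$ with a fixed planar embedding, fix a variable $x$ whose incident edges appear in cyclic order $e_1,\ldots,e_k$ with signs $\sigma_1,\ldots,\sigma_k \in \{+,-\}$. Replace $x$ by $k$ fresh copies $x_1,\ldots,x_k$, where $x_i$ takes the role of $x$ in edge $e_i$ with its original sign $\sigma_i$. To preserve equisatisfiability, insert equality-enforcing $2$-clauses $(x_i \vee \bar{x}_{i+1})$ and $(\bar{x}_i \vee x_{i+1})$ between each cyclically consecutive pair; together they force $x_1 = x_2 = \cdots = x_k$ in every satisfying assignment, so the new instance is equisatisfiable with the original.

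Next I would embed the modified graph by drawing the cycle of copies as a small ring around the former location of $x$, placing $x_i$ near the endpoint of $e_i$. Each copy $x_i$ then has exactly five incident edges in the bipartite incidence graph: the original literal edge with sign $\sigma_i$, together with two edges to equality clauses shared with $x_{i-1}$ (one positive, one negative) and two with $x_{i+1}$ (again one of each sign). Since every equality clause is a degree-two vertex in the incidence graph, the ring can be drawn so that, locally at each $x_i$, all positive incidences cluster on one side of the node and all negative ones on the other; the outward original edge is routed to the appropriate side depending on $\sigma_i$. Performed at every variable, this produces a planar embedding satisfying the side constraint.

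The main obstacle is verifying the local drawing at each copy: the five edges must be arrangeable in the required cyclic order $(+,+,+,-,-)$ or $(+,+,-,-,-)$, while keeping each equality clause between $x_i$ and $x_{i+1}$ inside the face shared by those consecutive copies, and simultaneously routing the original literal edge outward without crossings. This is a small, finite drawing argument that can be checked once and then applied uniformly. Since the reduction is polynomial in the size of the input, preserves satisfiability, and produces an instance meeting the side property at every variable node, NP-completeness transfers from the unrestricted planar case.
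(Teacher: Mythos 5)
The paper does not actually prove this lemma: it is quoted from Lichtenstein's paper (with the strengthening to planar $3$-SAT delegated to Tippenhauer's thesis), so your attempt can only be judged on its own merits. Your overall plan---split each variable $x$ into one copy per occurrence, arranged according to the rotation at $x$, and tie the copies together with equality constraints so that satisfiability is preserved---is indeed the standard way to prove this statement.

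The genuine gap is exactly in the step you defer as ``a small, finite drawing argument that can be checked once'': for the gadget you chose, that check fails. With both clauses $(x_i\vee\bar x_{i+1})$ and $(\bar x_i\vee x_{i+1})$ between every cyclically consecutive pair, each copy has degree five, and the two arcs from $x_i$ into one equivalence pair are forced to be consecutive in the rotation at $x_i$ (the $4$-cycle $x_i,(x_i\vee\bar x_{i+1}),x_{i+1},(\bar x_i\vee x_{i+1})$ bounds a face), with one positive and one negative sign. In a ring the literal arc must leave through the outer face, since the inner face of the necklace is enclosed; so the rotation at $x_i$ is $q_i^{\mathrm{out}},q_i^{\mathrm{in}},q_{i-1}^{\mathrm{in}},q_{i-1}^{\mathrm{out}},e_i$, where $q_j^{\mathrm{out}},q_j^{\mathrm{in}}$ are the arcs to the outer and inner clause of the $j$-th equivalence pair. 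Each pair contributes one $+$ and one $-$, so the majority sign $\sigma_i$ occurs three times, and the only way to make it contiguous is to have both $q_i^{\mathrm{out}}$ and $q_{i-1}^{\mathrm{out}}$ carry sign $\sigma_i$ at $x_i$. But the clause placed on the outside of pair $i$ has opposite signs at $x_i$ and at $x_{i+1}$, so this forces $\sigma_{i+1}=-\sigma_i$ all around the ring. A variable occurring positively in three clauses therefore admits \emph{no} embedding of your gadget with the side property: at some copy the cyclic sign pattern is unavoidably of the form $+,-,+,-,\sigma$. The repair is the classical one: use a single directed cycle of implications $(\bar x_1\vee x_2),(\bar x_2\vee x_3),\dots,(\bar x_k\vee x_1)$, which still forces all copies equal but gives every copy degree exactly three; in a cyclic order of three arcs any two-element sign class is automatically contiguous, so the separation condition holds for free. (Alternatively, keep equivalences but arrange the copies along a path rather than a ring, so that each literal arc can be routed into either of the two free gaps; then the sign constraints become satisfiable.) As written, however, your ring-of-equivalences gadget does not admit the claimed drawing, so the proof is incomplete at its crucial step.
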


This lemma can easily be strengthened to the case of planar $3$-SAT as noted by Lichtenstein, and explicitly done by Tippenhauer~\cite{tippenhauer2016planar} in his Master's thesis. From here a reduction to MPLSAT becomes indeed straightforward.

\begin{theorem}\label{NPC}
MPLSAT can be reduced in polynomial time to MCP, and to MCT.
\end{theorem}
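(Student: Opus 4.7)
The plan is to reduce MPLSAT to MCP (and, with minor modifications, to MCT) by the standard gadget-based approach. Given an MPLSAT instance consisting of a CNF formula $F$ with variable set $X$ and clause set $C$ together with a planar bipartite graph $G=(X\cup C,E)$ that, by the strengthened form of Lichtenstein's lemma, admits an embedding in which all positive occurrences of each variable leave on one side of the variable node and all negative occurrences on the other, I will construct in polynomial time a point set $P$ and an integer $k$ such that $F$ is satisfiable if and only if $P$ admits a convex partition (respectively, a convex tiling) into at most $k$ faces.

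The construction will place one gadget per variable, one per clause, and a chain of ``wire'' points along each edge of $G$, arranged according to the fixed planar embedding of $G$. A variable gadget will be a small configuration of points that has exactly two minimum convex partitions, identified with the truth values \emph{true} and \emph{false}; the state of the gadget will determine, through the first wire point, which of two local partitioning ``modes'' propagates along the wire. A wire will be built so that staying in a single mode along its length is strictly cheaper than switching modes, so the truth value of the variable is faithfully transmitted to each incident clause. A clause gadget will be designed so that its minimum number of convex faces is attained precisely when at least one incident wire arrives in the state corresponding to a satisfied literal; the orientation of positive/negative incidences on the two sides of the variable nodes (guaranteed by the Lichtenstein property) is what allows the clause gadget to distinguish positive and negative literals geometrically, and the MPLSAT restriction that each $3$-clause contains at least one positive and one negative literal makes the $3$-literal clause gadget symmetric enough to be realised.

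I will then set $k$ to be the sum over all gadgets of their respective local minima and argue two directions. For the forward direction, a satisfying assignment of $F$ yields a convex partition of $P$ with exactly $k$ faces by putting each gadget in its locally optimal state. For the converse, I will show that any convex partition of $P$ uses at least as many faces as the sum of the local minima, with equality forcing each gadget to be in one of its two canonical states; reading off the variable gadgets then produces a truth assignment, and the correctness of each clause gadget guarantees that every clause is satisfied. To prevent faces straddling different gadgets and introducing cheaper partitions, each gadget will be enclosed by a sufficiently dense ``frame'' of points forcing every convex face to lie in a single gadget's region. The adaptation to MCT is essentially the same: in MCP the gadgets include extra anchor points that serve as the necessary vertices of their convex faces, and, as Keil observed in adapting Lingas' proof, removing those anchor points (equivalently, permitting Steiner vertices) yields the MCT version of every gadget while preserving the correspondence between satisfying assignments and minimum partitions.

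The main obstacle will be the detailed design and analysis of the clause gadget in the MCP version, in particular for $3$-literal clauses, where one must rule out any ``cheating'' convex partition that attains the local target without all incoming literals agreeing with a truth assignment. This is the delicate bookkeeping step: one has to enumerate the combinatorial types of convex partitions locally compatible with the frame, show that each of them corresponds to exactly one truth pattern at the incoming wires, and verify that no convex polygon can span two gadgets thanks to the framing. Given this, the reduction is polynomial in $|X|+|C|$, which together with NP-completeness of MPLSAT establishes NP-hardness of both MCP and MCT.
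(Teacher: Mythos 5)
Your overall strategy coincides with the paper's: both reduce from MPLSAT using Lingas-style wires for variables and junction gadgets for clauses, laid out along a planar embedding, with the target $k$ set to the sum of local optima, and both handle the MCP/MCT distinction by adding or removing a few anchor vertices in the junction (the paper adds one extra segment per junction so that the prolonged rectangle becomes a trapezoid with vertices in $P$). So the architecture is right. However, there is a genuine gap at the single most delicate point of the reduction, and your proposal passes over it with the phrase that each gadget ``will be enclosed by a sufficiently dense frame of points forcing every convex face to lie in a single gadget's region.'' In Lingas' original setting the gadgets are polygons with holes, and convex pieces are confined by the polygon's edges for free. In MCP/MCT the input is only a point set: the faces must partition the entire convex hull, and a convex face can always slip \emph{between} two consecutive points of a discretised segment, no matter how dense the discretisation is. Density alone therefore does not confine faces to gadget regions, and this is exactly where the real work of the paper lies.

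The paper resolves this by embedding the whole construction in a unit grid $\Lambda$, discretising every segment into $K$ evenly spaced points with $K$ larger than twice the number of unit squares of $\Lambda$, and then running an area/packing argument: if a convex set of a tiling squeezes through a gap of width $1/K$ in a segment, then on at least one side of that segment it occupies area at most $1/K$ of the adjacent unit square (Lemma~\ref{lemma:bigAreaNotCross}); combining this with a counting argument over squares (Lemma~\ref{lemma:notSameLineColumn} and the proof of Lemma~\ref{lemma:minimum}) shows that any tiling containing such a straddling set can be replaced by one with strictly fewer faces, so minimum tilings respect the gadget boundaries. Without an argument of this kind your reduction does not go through: the correspondence between minimum partitions and truth assignments breaks down if faces can cross gadget walls. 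Separately, your plan to design fresh variable and clause gadgets and ``enumerate the combinatorial types of convex partitions locally compatible with the frame'' is left entirely open, whereas the paper avoids this by importing Lingas' Lemmas~\ref{lemma:wire} and~\ref{lemma:junction} verbatim once the confinement property is established. You should either carry out the area argument or supply an equivalent mechanism that provably blocks convex faces from crossing the discretised segments.
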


As it is easy to see that MCP is in NP, Theorem~\ref{NPC} implies that MCP is NP-complete. The question whether MCT is in NP is still open. Let $P$ be a point set, and let us consider the set $\mathcal{L}$ of all lines going through at least two points in $P$. Let $P'$ be the set of points at the intersection of at least two lines in $\mathcal{L}$. One might think that there exists a minimum convex tiling, such that all Steiner points belong to $P'$. We show in Figure~\ref{fig:maybeNotInNP} that this is not the case.

\begin{figure}[ht]
    \centering
    \begin{tikzpicture}[scale=1]

\draw[fill, yellow, opacity=0.5] (-2,0)--(-0.5,1.5)--(-2,3)--cycle;
\draw[fill, blue, opacity=0.3] (-2,0)--(-0.5,1.5)--(2,1.5)--(2,0)--cycle;
\draw[fill, gray, opacity=0.3] (-2,3)--(-0.5,1.5)--(2,1.5)--(2,3)--cycle;

\node at (-2,0) {\textbullet};
\node at (-2,3) {\textbullet};
\node at (2,0) {\textbullet};
\node at (2,3) {\textbullet};

\node at  (-1.7267867970742536,2.7267867970742534) {\textbullet};

\node at  (-1.5,2.5) {\textbullet};

\node at  (-1.2072152171539057,2.2072152171539057) {\textbullet};

\node at  (-1,2) {\textbullet};

\node at  (-0.7636785025877549,1.7636785025877548) {\textbullet};

\node at  (-1.9978097731153714,-0.0021870334815434767) {\textbullet};

\node at  (-1.7249955027911472,0.2714245203924688) {\textbullet};

\node at  (-1.498539848743581,0.498541977678971) {\textbullet};

\node at  (-1.2061825759660196,0.7917536472737127) {\textbullet};

\node at  (-0.9992699243717904,0.9992709888394855) {\textbullet};

\node at  (-0.7632934920909107,1.2359370482698966) {\textbullet};

\node[red] at  (-0.5,1.5) {\textbullet};
\node[red] at  (2,1.5) {\textbullet};
\node at  (0.5,1.5) {\textbullet};

    \end{tikzpicture}
    \caption{Points in black are the input point set $P$. The points in red are Steiner points in the unique minimum convex tiling of $P$. The point in red to the right does not belong to the set $P'$.}
    \label{fig:maybeNotInNP}
\end{figure}

We first prove Theorem~\ref{NPC} for the MCT problem. We do the reduction from MPLSAT to MCT by constructing a point set in three steps. First we construct a non-simple polygon, in a similar way as in Lingas' proof, with some more constraints. Secondly, we add some line segments to build a grid around the polygon, and finally we discretise each line segment into evenly spaced collinear points. The idea of the first part is to mimic Lingas' proof. The second part makes the correctness proof easier, and the last part transforms the construction into our setting. The aim of the grid is to force the polygons in a minimum convex tiling to be rectangular.

We use the gadgets introduced by Lingas, namely cranked wires and junctions~\cite{lingas1982power}. A wire is shown in Figure~\ref{fig:wire}. It consists of a loop delimited by two polygons, one inside the other. In Lingas' construction, the two polygons are simple, and a wire is therefore a polygon with one hole. Moreover in his proof the dimensions of the cranks do not matter. In our case, the polygon inside is not simple, and each line segment has unit length. Each wire is bent several times with an angle of $90^\circ$, as shown in Figure~\ref{fig:wire}, in order to close the loop.

\begin{figure}[ht]
    \centering
    \begin{tikzpicture}[scale=0.6]

\draw[fill, gray, opacity=0.3] (0,-2)--(0,-1)--(-1,-1)--(-1,0)--(0,0)--(0,1)--(-1,1)--(-1,2)--(0,2)--(0,3)--(1,3)--(1,4)--(2,4)--(2,3)--(3,3)--(3,4)--(4,4)--(4,3)--(4,2)--(3,2)--(3,1)--(2,1)--(2,2)--(1,2)--(1,1)--(2,1)--(2,0)--(1,0)--(1,-1)--(2,-1)--(2,-2)--(1,-2)--cycle;

\draw[thick] (0,-2)--(0,-1)--(-1,-1)--(-1,0)--(0,0)--(0,1)--(-1,1)--(-1,2)--(0,2)--(0,3)--(1,3)--(1,4)--(2,4)--(2,3)--(3,3)--(3,4)--(4,4)--(4,3);
\draw[thick] (1,-2)--(2,-2)--(2,-1)--(1,-1)--(1,0)--(2,0)--(2,1)--(1,1)--(1,2)--(2,2)--(2,1)--(3,1)--(3,2)--(4,2);

\draw[red] (0.5,-5)--(0.5,2.5)--(6,2.5)--(6,6)--(11,6)--(11,-1)--(7,-1)--(7,-5)--cycle; 

    \end{tikzpicture}
    \caption{A cranked wire, edges are in black and its interior is in grey. The wire follows the whole red loop, but for the sake of simplicity, only a section of the wire at a bend is drawn.}
    \label{fig:wire}
\end{figure}

The wires are used to encode the values of the variables, with one wire for each variable. We are interested in two possible tilings of a wire, called vertical and horizontal, which are shown in Figure~\ref{fig:wirePartition}. 

\begin{figure}[ht]
    \centering
    \begin{tikzpicture}[scale=0.7]

\draw[fill, blue, opacity=0.3] (0,-2)--(0,3)--(1,3)--(1,-2)--cycle;
\draw[fill, blue, opacity=0.3] (2,3)--(2,1)--(3,1)--(3,3)--cycle;
\draw[fill, red, opacity=0.3] (1,-2)--(2,-2)--(2,-1)--(1,-1)--cycle;
\draw[fill, red, opacity=0.3] (1,0)--(2,0)--(2,1)--(1,1)--cycle;
\draw[fill, red, opacity=0.3] (1,2)--(2,2)--(2,4)--(1,4)--cycle;
\draw[fill, red, opacity=0.3] (-1,-1)--(0,-1)--(0,0)--(-1,0)--cycle;
\draw[fill, red, opacity=0.3] (-1,1)--(0,1)--(0,2)--(-1,2)--cycle;

\draw (0,-2)--(0,-1)--(-1,-1)--(-1,0)--(0,0)--(0,1)--(-1,1)--(-1,2)--(0,2)--(0,3)--(1,3)--(1,4)--(2,4)--(2,3)--(3,3);
\draw (1,-2)--(2,-2)--(2,-1)--(1,-1)--(1,0)--(2,0)--(2,1)--(1,1)--(1,2)--(2,2)--(2,1)--(3,1)--(3,2);

\draw (6,-2)--(6,-1)--(5,-1)--(5,0)--(6,0)--(6,1)--(5,1)--(5,2)--(6,2)--(6,3)--(7,3)--(7,4)--(8,4)--(8,3)--(9,3);
\draw (7,-2)--(8,-2)--(8,-1)--(7,-1)--(7,0)--(8,0)--(8,1)--(7,1)--(7,2)--(8,2)--(8,1)--(9,1)--(9,2);

\draw[fill, blue, opacity=0.3] (6,2)--(9,2)--(9,3)--(6,3)--cycle;
\draw[fill, blue, opacity=0.3] (6,0)--(8,0)--(8,1)--(6,1)--cycle;
\draw[fill, blue, opacity=0.3] (6,-2)--(8,-2)--(8,-1)--(6,-1)--cycle;
\draw[fill, red, opacity=0.3] (5,-1)--(7,-1)--(7,0)--(5,0)--cycle;
\draw[fill, red, opacity=0.3] (5,1)--(7,1)--(7,2)--(5,2)--cycle;
\draw[fill, red, opacity=0.3] (8,1)--(9,1)--(9,2)--(8,2)--cycle;
\draw[fill, red, opacity=0.3] (7,3)--(8,3)--(8,4)--(7,4)--cycle;

    \end{tikzpicture}
    \caption{A section of a wire and its optimal tilings: vertical (left) and horizontal (right).}
    \label{fig:wirePartition}
\end{figure}

As in Lingas' proof, we interpret the vertical tiling as setting the variable to \emph{true}, and the horizontal as \emph{false}. Lingas proved the following:

\begin{lemma}[Lingas~\cite{lingas1982power}]
\label{lemma:wire}
A minimum tiling with convex sets of a wire uses either vertical or horizontal rectangles but not both. Any other tiling requires at least one more convex set.
\end{lemma}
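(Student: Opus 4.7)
The plan is to prove the lemma in three stages. First, I would reduce to axis-aligned rectangular tilings: since the wire is a rectilinear region whose corridor is exactly one unit wide everywhere, any convex subset of the wire must be an axis-aligned rectangle with at least one side of length~$1$. Together with the fact that the boundary vertices lie on an integer grid, this lets me assume that every tile in a minimum tiling is an axis-aligned rectangle cut from the unit grid induced by the boundary.

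Next, I would describe the vertical and horizontal tilings explicitly and verify that they use the same number~$t$ of pieces, by the rotational symmetry implicit in the construction of the wire. This establishes the upper bound $\mathrm{opt} \leq t$. For the matching lower bound, I would decompose the wire into a cyclic sequence of cranks connected by straight stretches of corridor. Each crank is a small neighborhood of a few unit cells, shaped by the notches introduced by the non-simple inner polygon. By enumerating the finitely many ways axis-aligned rectangles can cover the cells of such a neighborhood, I would show that each crank contributes a fixed minimum cost to any tiling, with this minimum attained by exactly two local configurations that correspond to the restrictions of the two canonical tilings.

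The key step is then a propagation argument around the closed loop: if two adjacent cranks are tiled by different local configurations (one vertical, one horizontal), then the straight stretch connecting them must be covered with at least one extra tile compared to the case where both use the same configuration. Since the loop is closed, this forces any minimum tiling to use the same local configuration at every crank, i.e., to coincide globally with either the vertical or the horizontal canonical tiling. Any other tiling must be suboptimal by at least one piece, which is exactly the claim.

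The main obstacle will be the propagation step: ruling out clever global reassemblies of tiles that might appear to save a piece somewhere in order to ``pay'' for an orientation transition elsewhere. I would address this via an exchange argument, showing that from any non-canonical tiling one can perform a local modification reducing the number of orientation-transitions by one without increasing the total tile count; iterating this reduces to one of the two canonical tilings and bounds the tile count of the original from below.
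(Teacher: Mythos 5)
The paper does not actually prove this lemma: it is quoted from Lingas~\cite{lingas1982power} (the surrounding text reads ``Lingas proved the following''), and the paper uses it as a black box in the correctness argument for the reduction. So there is no in-paper proof to compare your plan against; I can only judge it on its own merits. Your overall architecture --- two canonical tilings, a local cost analysis at each crank, and a propagation/exchange argument around the closed loop --- is the natural shape for such a gadget lemma and is plausibly close to Lingas's original argument.

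The genuine gap is your first step. The lemma concerns \emph{tilings} by arbitrary convex sets, with Steiner points allowed, and the assertion that ``any convex subset of the wire must be an axis-aligned rectangle with at least one side of length $1$'' is false as stated (any small convex set inside a unit cell is a convex subset of the wire), while the intended claim --- that some minimum tiling consists only of grid-aligned rectangles --- is essentially the whole difficulty. Near each bend and each crank the wire is locally an L-shape or a T-shape, and a single convex tile can straddle such a corner as a non-rectangular polygon; ruling out that such pieces ever save a tile is exactly what your later ``clever global reassemblies'' worry is about, and it cannot be assumed away at the outset. The standard way to make this rigorous is not to restrict the tile shapes but to count reflex vertices of the wire: every reflex vertex must have a tile edge emanating from it, each convex tile can resolve only reflex vertices that it ``sees'' along a common supporting line, and the cranks are dimensioned so that the only pairings of reflex vertices achieving the minimum count are the two canonical ones. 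If you replace your step~1 by such a counting argument, steps~2--4 of your plan become a finite local case analysis and the loop-parity argument you describe. A secondary inaccuracy: the two canonical tilings do not have equal size ``by rotational symmetry'' of the wire --- the wire is an arbitrary rectilinear loop with no such symmetry --- but rather because each crank and each straight stretch contributes the same number of rectangles under either orientation; this is a local, not a global, symmetry.
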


 The second tool is called a junction, and it serves to model a clause. Figure~\ref{fig:junction} depicts a junction corresponding to a clause of three literals. Figure~\ref{fig:junctionZoom} shows a zoom on the most important part of a junction. A junction has three arms, represented as dashed black line segments. A junction for a clause of two literals is obtained by blocking one of the arms of the junction. The blue line segments have length $1+\varepsilon$, for a fixed $\varepsilon$ arbitrarily small. Therefore, the red line segments are not aligned with the long black line segment to their right. A junction can be in four different orientations, which can be obtained successively by making rotations of $90^\circ$. Let us consider the orientation of the junction in Figure~\ref{fig:junction}. One wire is connected from the left, one from above, and one from the right. A wire can only be connected to a junction at one of its bends (see Figure~\ref{fig:wire}). We then remove the line segment corresponding to the arm of the junction, as illustrated in Figure~\ref{fig:junction}.
 
 \begin{figure}
    \centering
    \begin{tikzpicture}[rotate=270,scale=0.62]

\draw[fill, gray, opacity=0.3] (0,17)--(2,17)--(2,8)--(1,8)--(1,7)--(2,7)--(2,5)--(0.1,5)--(0.1,7)--(0.1,6)--(-1,6)--(-1,8)--(0,8)--cycle;

\draw[dashed, thick] (0,17)--(1,17);
\draw[dashed, thick] (0.1,5)--(1,5);
\draw[dashed, thick] (-1,7)--(-1,8);

\draw[dashed, thick, brown] (1,8)--(0,8);
\draw[dashed, thick, brown] (0.1,8)-- (0.1,7)--(1,7);

\draw [] (1,0)-- (1,1)-- (2,1)-- (2,2)-- (1,2)-- (1,3)-- (2,3)-- (2,4)-- (1,4)-- (1,5)-- (2,5)-- (2,7)-- (1,7)-- (1,8)-- (2,8)-- (2,17)-- (1,17)-- (1,18)-- (2,18)-- (2,19)-- (1,19)-- (1,20)-- (2,20)-- (2,21)-- (1,21)-- (1,22);
\draw [] (-1,22)-- (-1,21)-- (0,21)-- (0,20)-- (-1,20)--(-1,19)-- (0,19)-- (0,18)-- (-1,18)-- (-1,19)-- (-2,19)-- (-2,18)-- (-3,18)-- (-3,19)-- (-4,19)-- (-4,18);
\draw [thick,blue] (0.1,0)-- (-1,0);
\draw[thick,blue] (-1,1)-- (0.1,1);
\draw[thick,blue](0.1,2)-- (-1,2);
\draw[thick,blue] (-1,3)-- (0.1,3);
\draw[thick,blue] (0.1,4)-- (-1,4);
\draw[thick,red] (0.1,1)--(0.1,2);
\draw[thick,red] (0.1,3)--(0.1,4);
\draw (-1,2)--(-1,4);
\draw (-1,0)--(-1,1);
\draw (-1,3)-- (-2,3)-- (-2,4)-- (-3,4)-- (-3,3)-- (-4,3)-- (-4,4)-- (-5,4);
\draw [] (-5,5)-- (-5,6);
\draw [] (-4,6)-- (-4,5)--(-3,5)--(-3,6);
\draw [] (-2,6)-- (-2,5)--(-1,5)--(-1,7);
\draw [thick,red] (0.1,5)-- (0.1,7);
\draw [thick,blue](0.1,6)--(-1,6);
\draw (-1,7)-- (-2,7)-- (-2,6)-- (-3,6)-- (-3,7)-- (-4,7)-- (-4,6)-- (-5,6)-- (-5,7);
\draw [] (-4,16)-- (-4,17)-- (-3,17)-- (-3,16)-- (-2,16)-- (-2,17)-- (-1,17)-- (-1,16)-- (0,16);
\draw (0,17)-- (0,8)-- (-1,8);
\draw (0,9)--(-1,9)--(-1,10)--(0,10);
\draw (0,11)--(-1,11)--(-1,12)--(0,12);
\draw [] (-5,9)-- (-5,8)-- (-4,8)-- (-4,9)--(-3,9)-- (-3,8)-- (-2,8)-- (-2,9)-- (-3,9)-- (-3,10)-- (-2,10)-- (-2,11)-- (-3,11)-- (-3,12)--(-2,12);

    \end{tikzpicture}
    \caption{A junction for the MCT problem.}
    \label{fig:junction}
\end{figure}
 \begin{figure}
    \centering
    \begin{tikzpicture}[rotate=270,scale=0.9]
\useasboundingbox (-3,1.5) rectangle (2,10.5);
\draw[fill, gray, opacity=0.3] (0,17)--(2,17)--(2,8)--(1,8)--(1,7)--(2,7)--(2,5)--(0.2,5)--(0.2,7)--(0.2,6)--(-1,6)--(-1,8)--(0,8)--cycle;

\draw[dashed, thick] (0,17)--(1,17);
\draw[dashed, thick] (0.2,5)--(1,5);
\draw[dashed, thick] (-1,7)--(-1,8);

\draw[dashed, thick, brown] (1,8)--(0,8);
\draw[dashed, thick, brown] (0.2,8)-- (0.2,7)--(1,7);

\draw [] (1,0)-- (1,1)-- (2,1)-- (2,2)-- (1,2)-- (1,3)-- (2,3)-- (2,4)-- (1,4)-- (1,5)-- (2,5)-- (2,7)-- (1,7)-- (1,8)-- (2,8)-- (2,17)-- (1,17)-- (1,18)-- (2,18)-- (2,19)-- (1,19)-- (1,20)-- (2,20)-- (2,21)-- (1,21)-- (1,22);
\draw [] (-1,22)-- (-1,21)-- (0,21)-- (0,20)-- (-1,20)--(-1,19)-- (0,19)-- (0,18)-- (-1,18)-- (-1,19)-- (-2,19)-- (-2,18);
\draw [thick,blue] (0.2,0)-- (-1,0);
\draw[thick,blue] (-1,1)-- (0.2,1);
\draw[thick,blue](0.2,2)-- (-1,2);
\draw[thick,blue] (-1,3)-- (0.2,3);
\draw[thick,blue] (0.2,4)-- (-1,4);
\draw[thick,red] (0.2,1)--(0.2,2);
\draw[thick,red] (0.2,3)--(0.2,4);
\draw (-1,2)--(-1,4);
\draw (-1,0)--(-1,1);
\draw (-1,3)-- (-2,3)-- (-2,4)-- (-3,4);
\draw [] (-2,6)-- (-2,5)--(-1,5)--(-1,7);
\draw [thick,red] (0.2,5)-- (0.2,7);
\draw [thick,blue](0.2,6)--(-1,6);
\draw (-1,7)-- (-2,7)-- (-2,6)-- (-3,6);
\draw[very thick] (0,17)-- (0,8);
\draw (0,8)-- (-1,8);
\draw (0,9)--(-1,9)--(-1,10)--(0,10);
\draw (0,11)--(-1,11)--(-1,12)--(0,12);
\draw [] (-3,8)-- (-2,8)-- (-2,9)-- (-3,9)-- (-3,10)-- (-2,10)-- (-2,11)-- (-3,11)-- (-3,12)--(-2,12);

\fill[white] (-3.5,10.5)--(2.1,10.5)--(2.1,25)--(-3.5,25)--cycle;
\fill[white] (-3.5,1.5)--(2.1,1.5)--(2.1,-3)--(-3.5,-3)--cycle;

    \end{tikzpicture}
    \caption{A zoom on a junction for the MCT problem. The blue segments have length $1+\varepsilon$. One of the intersections between two dashed brown segments lies in the interior of the polygon. The red segments are not aligned with the long thick one to the right.}
    \label{fig:junctionZoom}
\end{figure}

If the tiling of the wire connected from the left is horizontal, then one of the rectangles can be prolonged into the junction. The same holds for the wire connected from the right. On the contrary, a rectangle can be prolonged from the wire connected from above only if the tiling is vertical. If a rectangle can be prolonged, we say that the wire \emph{sends true}, otherwise it \emph{sends false}. If a clause contains two negative literals $\bar{x},\bar{y}$ and one positive $z$, the corresponding junction is as in Figure~\ref{fig:junction}, or the $180^\circ$ rotation of it. The wire corresponding to $z$ is connected from above or below, and the wires corresponding to $x$ and $y$ are connected from the left and from the right, or vice versa. Therefore, the wire corresponding to $z$ sends \emph{true} if and only if $z$ is set to \emph{true}. On the contrary, the wire corresponding to $x$ (respectively $y$) sends \emph{true} if and only if $x$ (respectively $y$) is set to \emph{false}. If the clause has two positive literals, then the junction is vertical, and the junction behaves likewise.

Lingas proved that when minimising the number of convex polygons in a tiling, for each junction at least one adjacent wire sends \emph{true}. Before stating Lingas' lemma exactly, we need to explain the first step of the construction of the point set.



\subsection{Construction of the polygon with holes}

Let us consider one instance $(F,G)$ of MPLSAT. Lingas states that the planarity of $G$ implies that the junctions and the wires can be embedded as explained above, and so that they do not overlap~\cite{lingas1982power}. Thus we obtain a polygon with holes, that we denote by $\Pi$. He adds without proof that the dimensions of $\Pi$ are polynomially related to $|V|$, where $V$ denotes the vertex set of $G$. We show in the following paragraph how to embed the polygon with holes into a grid $\Lambda$, such that each edge consists of line segments of $\Lambda$. Actually, some parts of edges are not exactly line segments of $\Lambda$, but are shifted orthogonally by distance $\varepsilon$ (recall the red segments in Figure~\ref{fig:junction}). However, as $\varepsilon$ is arbitrarily small, this does not impact our proof. Thus, for sake of simplicity, we will from now on do as if the edges entirely consisted of segments of $\Lambda$. Moreover, we show how to construct $\Lambda$ in $\Theta(|V|^2)$ size. 

Let us consider an instance $(F,G)$ of MPLSAT. Recall that the vertex set of $G$ is the union of $X$ and $C$, where $X$ denotes the set of variables and $C$ the set of clauses. We define a new graph $G'=(V',E')$ as follow: For each vertex $x\in X$ of degree $\delta$, we have $\delta$ vertices $x_1, \dots, x_\delta$ in $V'$. Moreover for each $c\in C$ we have one vertex $c$ in $V'$. Now for each edge $(x,c)\in E$, we have one edge between $c\in V'$ and one of the $x_i\in V'$. We do so such that each vertex $x_i$ in $V'$ corresponding to some $x\in V$ has degree $1$. Then we add edges between the vertices $x_1,\dots, x_\delta$ so that they induce a path. We can do this such that the graph we obtain, $G'$, is still planar. Moreover, since a clause contains at most three vertices, the maximal degree of $G'$ is at most $3$. The number of vertices we have added is at most the number of edges in $G$, therefore $|V'|=\Theta(|V|)$. Following the result of Valiant~\cite{valiant1981universality}, we can embed $G'$ in a grid of size $\Theta(|V|^2)$, such that the edges consist of line segments of the grid. Let $s$ be the line segment incident to $c$ in the edge $\{x_i,c\}$, where $x_i$ corresponds a variable $x\in X$ and $c$ is a clause. Our constraint is that $x$ appears positively in $c$ if and only if $s$ is vertical. Moreover we impose $s$ to be of length at least $10$, so that we have enough space later to replace $c$ by a junction. We claim that we can find an embedding of $G$ on another grid, still of size $\Theta(|V|^2)$, that satisfies our constraint. We first scale the embedding by $3$. Then we can change the path of each edge adjacent to $c$ as illustrated in Figure~\ref{fig:clauseIncidence}, so that the embedding satisfies our first constraint. The line segments in red (respectively blue) correspond to edges $\{x,c\}$ where $x$ appears positively (respectively negatively) in $c$. By assumption, at most two variables appear positively, and at most two appear negatively. Therefore it is possible to adapt the paths of the edges so that the red line segments are vertical and the blue ones horizontal. Finally, we scale the grid by $10$ to satisfy the second constraint.

\begin{figure}
    \centering
    \begin{tikzpicture}[scale=1]
\draw[thick,red] (-1,0)--(0,0);
\draw[thick,blue] (0,1)--(0,0);
\draw[thick,blue] (0,-1)--(0,0);
\node at (0,0) {\textbullet};

\draw[right] (0,0) node[] {$c$};

\node at (6,0) {\textbullet};
\draw[thick,red] (6,1)--(6,0);
\draw[thick,blue] (7,0)--(6,0);
\draw[thick,blue] (5,0)--(6,0);
\draw (6,1)--(4,1)--(4,0)--(3,0);
\draw (7,0)--(7,2)--(6,2)--(6,3);
\draw (5,0)--(5,-2)--(6,-2)--(6,-3);
\draw[below] (6,0) node[] {$c$};

    \end{tikzpicture}
    \caption{How to adapt the embedding to satisfy the constraint. On the left, before adaptation, on the right, after adaptation.}
    \label{fig:clauseIncidence}
\end{figure}

Now we replace each clause by a junction. Let $x_1,\dots, x_\delta$ be the vertices corresponding to a variable $x\in X$. Observe that the set of line segments $\bigcup_{1 \leq i \leq \delta} \{e \in E' \mid x_i\in e\}$ is a tree. Let us take a constant sufficiently big, and scale the grid so that we can replace each tree corresponding to some variable $x\in X$ by a simple polygon that is as close as possible to the tree. Then, after a new scaling, we can replace each polygon by a wire, in a grid which is still of size $\Theta(|V|^2)$. As $\varepsilon$ introduced in the definition of the junctions is arbitrarily small, we can consider that all line segments of junctions and wires are segments of a grid. For sake of simplicity, we will from now on omit to mention that some line segments are not exactly on the grid. Again, it is possible to find an embedding in a grid of size $\Theta(|V|^2)$, such that wires do not intersect, and a wire is connected to a junction if and only the variable corresponding to the wire is contained in the clause corresponding to the junction.

We can now state Lingas' lemma:

\begin{lemma}[Lingas~\cite{lingas1982power}]
\label{lemma:junction}
In a minimum tiling with convex sets of $\Pi$, a junction contains wholly at least three convex sets. The junction contains wholly exactly three if and only if at least one of the wires connected to the junction sends \emph{true}.
\end{lemma}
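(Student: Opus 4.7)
The plan has three parts: (i) a general lower bound of three convex sets wholly inside the junction, (ii) an explicit construction achieving three when at least one adjacent wire sends \emph{true}, and (iii) the converse, showing that when every adjacent wire sends \emph{false}, at least four convex sets are needed. The hard part will be (iii); (i) and (ii) follow from a careful but routine analysis of the local geometry of the junction.

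For (i), I would identify three reflex vertices in the boundary of the junction region that are pairwise ``incompatible''—that is, no single convex polygon contained in the junction can simultaneously have all three on its boundary in a way that resolves their reflex angles. Two natural candidates are the inner corners at the two ends of the arm that is not collinear with its wire (induced by the red segments shifted by $\varepsilon$); the third comes from the inner corner where the two remaining arms meet the central pocket. Pairwise incompatibility is established by checking that the convex hull of any two of these vertices, together with the rest of any potential convex piece, would have to cross the boundary of $\Pi$ or enter a wire. Since each reflex vertex must be resolved by some convex set in the tiling, and each set resolves at most one of them, at least three convex sets are wholly contained in the junction.

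For (ii), assume without loss of generality (by the four-fold rotational symmetry) that the wire connected from the left sends \emph{true}, so that by Lemma~\ref{lemma:wire} the wire is tiled horizontally and the last rectangle adjacent to the junction prolongs by unit length into the junction. I would then exhibit explicitly a partition of the remaining junction region into two convex rectangles, using the fact that the $1+\varepsilon$ blue segments and the unit-length arms leave exactly two rectangular pockets on the two non-prolonged sides. Counting the prolonged rectangle (which is wholly contained in the junction once it crosses the arm) gives three convex pieces in total.

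For (iii), the main obstacle: I would argue that when no adjacent wire sends \emph{true}, no rectangle from any adjacent wire crosses into the junction, so the three wire--junction interfaces each contribute an additional reflex vertex that must be resolved by a convex set wholly inside the junction. Combined with the reflex vertex forced by the $\varepsilon$-offset of the red segments (which persists independently of the wire tilings and cannot be absorbed into any rectangle prolonged from a wire in this case), we obtain four pairwise incompatible reflex vertices. Proving the pairwise incompatibility is the delicate step: it amounts to checking, for each pair, that the line segment (or more generally the convex hull) that would have to lie inside a single convex set connecting them must cross either the boundary of $\Pi$ or one of the inner segments of the junction. The role of $\varepsilon$ here is essential—without the offset, certain diagonal convex sets could resolve two reflex vertices at once, collapsing the count back to three. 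This yields the lower bound of four convex sets in the no-\emph{true}-wire case, completing the ``only if'' direction.
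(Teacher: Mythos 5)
The paper does not prove this lemma at all: it is imported verbatim from Lingas~\cite{lingas1982power} and used as a black box, so there is no in-paper proof to compare yours against. Judged on its own terms, your three-part architecture (unconditional lower bound of three, achievability of three when some wire sends \emph{true}, lower bound of four otherwise) is the right shape for such a statement, and the witness-vertex idea is the standard tool. But as written it is a plan rather than a proof, and the deferred steps are exactly where all the content lives.

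Three concrete issues. First, the principle ``each reflex vertex must be resolved by some convex set, and each set resolves at most one of them'' is not valid as stated: a single convex piece can have several reflex vertices of the region on its boundary (each appearing as a convex corner of the piece), so counting reflex vertices does not directly count pieces. What you actually need is a set of witness \emph{regions} with the two properties that (a) no convex subset of $\Pi$ meets two of them --- verified by showing the segment joining any two exits $\Pi$ or crosses an inner segment of the junction --- and (b) any convex subset of $\Pi$ meeting one of them is necessarily wholly contained in the junction, i.e.\ cannot escape through an arm into a wire. Property (b) is essential because the lemma counts \emph{wholly contained} sets, and you never address it; in the \emph{true}-wire case one piece deliberately straddles the arm, so your witnesses must be placed where that straddling piece cannot reach. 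Second, the ``without loss of generality by the four-fold rotational symmetry'' in part (ii) is not available: the three arms of a junction are not equivalent (the top arm accepts a prolongation only from a vertically tiled wire, the side arms only from horizontally tiled ones, and the $\varepsilon$-shifted red segments break the symmetry further), so the three cases need separate, if similar, verification. Third, the lemma speaks of a \emph{minimum tiling of $\Pi$}, so the ``exactly three'' direction needs a local exchange argument --- if a junction with a \emph{true}-sending wire wholly contained four or more sets, one could re-tile the junction with three without disturbing the rest --- which your part (ii) construction supplies the ingredients for but does not state.
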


\subsection{Discretisation of the line segments}

To construct the point set of the reduction from MPLSAT to MCT, we first construct a collection of line segments. We then discretise this collection by replacing each line segment by a set of collinear points.

Let us consider our polygon with holes $\Pi$ that lies in the grid $\Lambda$. The grid consists of points with integer coordinates, and line segments between points that are at distance $1$. We consider the collection of line segments consisting of $\Pi$ union each line segment of $\Lambda$ whose relative interior is not contained in the interior of $\Pi$. Notice that therefore we have line segments outside $\Pi$, but also inside its holes. Moreover, the collection of line segments that we obtain, denoted by $\Phi$, is a subgraph of the grid graph $\Lambda$.

Now we define $K$ as twice the number of unit squares in $\Lambda$ plus $1$. Finally, we replace each line segment in $\Phi$ by $K$ points evenly spaced. We denote this point set by $P$.

\subsection{Proof of correctness}

We have constructed $P$ in order to have the following property:
\begin{lemma}
\label{lemma:minimum}
In a minimum convex tiling $\Sigma$ of $P$, for each convex set $S \in \Sigma$, the interior of $S$ does not intersect $\Phi$.
\end{lemma}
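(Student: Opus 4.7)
The plan is to argue by contradiction. Suppose $\Sigma$ is a minimum convex tiling of $P$ but some $S\in\Sigma$ satisfies $\mathrm{int}(S)\cap\Phi\neq\emptyset$. I will produce a strictly smaller convex tiling of $P$, contradicting the minimality of $\Sigma$.

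The first step is a width estimate. Pick $\ell\in\Phi$ with $\mathrm{int}(S)\cap\ell\neq\emptyset$. Since $\mathrm{int}(S)$ contains no point of $P$ while $\ell$ carries $K=2M+1$ evenly spaced points of $P$ (where $M$ denotes the total number of unit squares in $\Lambda$, so the spacing is $\frac{1}{K-1}=\frac{1}{2M}$), the open sub-interval $\mathrm{int}(S)\cap\ell$ must fit strictly between two consecutive such points and hence has length at most $\frac{1}{2M}$. So $S$ pierces $\ell$ only through a narrow slit.

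Next, I would establish a local obstruction. Let $U$ be a unit square of $\Lambda$ lying outside $\Pi$ (or inside a hole of $\Pi$), so that all four of its sides belong to $\Phi$. I claim no convex set in any valid tiling of $P$ can strictly contain $U$: such a set would extend beyond $U$ in some direction, and a direct case analysis (depending on whether the extension is across a side or merely at a corner) shows that its interior would then contain the relative interior of at least one side of $U$, hence the $K-2$ interior points of $P$ on that side, contradicting the no-point-of-$P$-in-interior condition. Consequently, whenever $S$ crosses into such a unit square $U$, the collection $\Sigma_U=\{S'\in\Sigma:\mathrm{int}(S')\cap\mathrm{int}(U)\neq\emptyset\}$ contains $S$ together with at least one additional convex set covering the non-empty region $U\setminus S$.

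To close the argument I would compare $\Sigma$ with the canonical $\Phi$-respecting tiling $\Sigma_0$ of $P$: take each unit square of $\Lambda$ outside $\mathrm{int}(\Pi)$ as its own convex set, and tile $\mathrm{int}(\Pi)$ with a Lingas-optimal family of vertical or horizontal rectangles (Lemma~\ref{lemma:wire}). In $\Sigma_0$, each such unit square costs exactly one convex set, whereas by the previous paragraph in $\Sigma$ every unit square crossed by $S$ costs at least two. Aggregating these local surpluses over all crossings, together with Lingas' minimum inside $\Pi$, yields $|\Sigma|>|\Sigma_0|$, contradicting minimality. The main obstacle is precisely this aggregation: a single crossing set $S$ can simultaneously serve as one of the two ``covering'' sets in several adjacent unit squares it meets, so care is needed to avoid double-charging. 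The choice $K=2M+1$ is calibrated exactly so that each slit is strictly narrower than the inter-point spacing $\frac{1}{2M}$ on every segment of $\Phi$; this is the margin that prevents the forced extra sets from being recycled across faces and secures the strict inequality $|\Sigma|>|\Sigma_0|$.
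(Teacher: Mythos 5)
Your plan shares its key geometric ingredient with the paper --- the observation that a convex set whose interior avoids $P$ can only cross a segment of $\Phi$ through a slit of width about $1/K$, which is exactly the paper's Lemma~\ref{lemma:bigAreaNotCross} --- and the comparison against a canonical $\Phi$-respecting tiling is also how the paper concludes. But the step you yourself flag as ``the main obstacle'' is not a technicality: it is the entire content of the proof, and as written your argument does not establish $|\Sigma|>|\Sigma_0|$. Your local obstruction only says that each unit square $U$ entered by a crossing set meets the interiors of at least two sets of $\Sigma$, while each uncrossed square meets at least one. That local count cannot simply be summed: a single set of $\Sigma$ can meet the interiors of many faces (large area in one, slivers of area at most $1/K$ in its neighbours), and the ``additional'' set covering $U\setminus S$ in one square may be the very set already charged to an adjacent square, or may itself straddle several squares. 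The assertion that the choice $K=2M+1$ ``prevents the forced extra sets from being recycled across faces'' is not justified and is not true as stated: $K$ controls the slit width and the area estimates, but by itself it does not make the charging injective. A separate issue is the inside of $\Pi$: once some set straddles $\partial\Pi$, the restriction of $\Sigma$ to $\Pi$ is no longer a convex tiling of $\Pi$, so you cannot directly invoke Lemmas~\ref{lemma:wire} and~\ref{lemma:junction} to certify the ``$W$'' part of your lower bound.

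The paper's proof is built precisely to close this gap. It assigns to each grid square $L$ a set $f_0(L)$ with $\A(f_0(L)\cap L)>1/K$ (which exists because a minimum tiling has fewer than $K/2$ sets), then massages $f_0$ into a map $f$ whose preimages are well-behaved along rows and columns, uses Lemma~\ref{lemma:notSameLineColumn} to show that a set serving several squares in general position forces the existence of \emph{distinct} sets with empty preimage (the quantity $M$ in the paper's proof), and finally rebuilds from $f^{-1}$ a rectilinear partition $T$ and a competing convex tiling $\Sigma'$ whose cardinality can be compared to $|\Sigma|$, with equality only in the crossing-free case. Until you supply an argument of this kind --- some injective charging scheme together with a way to splice the inside-$\Pi$ count to the outside count --- the proposal is a correct outline with its decisive step missing.
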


Before proving Lemma~\ref{lemma:minimum}, we first explain how we use it. Let $K'$ denote the number of unit squares in $\Phi$, plus the minimum number of rectangles in a partition of the wires, plus three times the number of clauses. Using Lemmas~\ref{lemma:wire} and~\ref{lemma:junction} shown by Lingas coupled with Lemma~\ref{lemma:minimum}, we immediately obtain the following theorem:

\begin{theorem}\label{equivalence}
The formula $F$ is satisfiable if and only if there exists a convex tiling of $P$ with $K'$ polygons.
\end{theorem}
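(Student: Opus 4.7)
The plan is to argue both directions by combining the three preceding lemmas. Lemma~\ref{lemma:minimum} is the key reduction tool: in any minimum convex tiling of $P$, no convex piece crosses a segment of $\Phi$, so the tiling decomposes cleanly into tilings of the regions carved out by $\Phi$, namely (i) the unit squares lying outside $\Pi$ (including those inside the holes of the wires), (ii) the interiors of the wires, and (iii) the interiors of the junctions. The count $K'$ is exactly the sum of the three corresponding minima: one convex piece per outside unit square, the minimum rectangular partition of each wire (Lemma~\ref{lemma:wire}), and three convex pieces per junction (Lemma~\ref{lemma:junction}).

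For the ``only if'' direction, assume $F$ is satisfiable and fix a satisfying assignment. For each variable $x \in X$, tile its wire vertically if $x$ is \emph{true} and horizontally if $x$ is \emph{false}; by Lemma~\ref{lemma:wire} each wire is tiled with the minimum number of rectangles. For each clause $c \in C$, at least one of its literals evaluates to \emph{true} under the assignment, so by construction of the junction (the mapping between wire orientation and the \emph{sends true} predicate, together with the placement of positive/negative literals on the horizontal/vertical arms) at least one wire adjacent to the junction of $c$ sends \emph{true}. Lemma~\ref{lemma:junction} then provides a tiling of that junction with exactly three convex sets, each prolonged appropriately into the wire that sends \emph{true}. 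Tile every outside unit square of $\Phi$ with itself. Summing gives a convex tiling of $P$ with exactly $K'$ pieces.

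For the ``if'' direction, suppose a convex tiling $\Sigma$ of $P$ with at most $K'$ pieces exists; without loss of generality $\Sigma$ is a minimum tiling, so Lemma~\ref{lemma:minimum} applies and $\Sigma$ restricts to a valid convex tiling of each region delimited by $\Phi$. The number of pieces in $\Sigma$ is therefore the sum, over these regions, of the number of pieces used inside each. The outside unit squares contribute one piece each (no smaller count is possible), the wires contribute at least the minimum-rectangle count from Lemma~\ref{lemma:wire}, and each junction contributes at least $3$ pieces by Lemma~\ref{lemma:junction}. Since the total equals $K'$, equality must hold region by region: every wire achieves its minimum (so by Lemma~\ref{lemma:wire} it is tiled purely vertically or purely horizontally), and every junction uses exactly three convex sets (so by Lemma~\ref{lemma:junction} at least one adjacent wire sends \emph{true}). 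Assign \emph{true} to variables with vertical wires and \emph{false} to variables with horizontal wires; the junction condition forces each clause to contain a literal evaluating to \emph{true}, so $F$ is satisfied.

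The main obstacle is the accounting in the ``if'' direction, which rests entirely on Lemma~\ref{lemma:minimum}: without the guarantee that convex pieces do not straddle segments of $\Phi$, one could not decouple the global count into per-region minima and the argument would collapse. Granting that lemma, the rest is a piecewise application of Lemmas~\ref{lemma:wire} and~\ref{lemma:junction}, both already established by Lingas in essentially the same setting.
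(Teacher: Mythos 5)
Your proposal is correct and follows essentially the same route as the paper, which derives Theorem~\ref{equivalence} directly as an immediate consequence of Lemmas~\ref{lemma:wire}, \ref{lemma:junction} and~\ref{lemma:minimum}; your write-up simply spells out the per-region accounting (unit squares, wires, junctions) that the paper leaves implicit. The only cosmetic slip is the phrase ``each prolonged appropriately'' in the forward direction --- only one rectangle of the true-sending wire is prolonged into the junction --- but this does not affect the argument.
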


Since $P$ and $K'$ can be computed in polynomial time, Theorem~\ref{equivalence} implies Theorem~\ref{NPC} for the MCT problem. We use a packing argument, and claim that in a convex tiling $\Sigma$ of $P$, if a convex set $S\in \Sigma$ has large area, then most of its area is contained in a unique cell of $\Phi$. Then we show that in a minimum convex tiling, all convex sets have large area, and that each of them fills the cell that contains it. For a set $S$, let $\A(S)$ be the area of $S$.



\begin{lemma}
\label{lemma:bigAreaNotCross}
Let $L$ and $L'$ be two squares in $\Lambda$, and $S$ be a convex polygon whose interior does not contain any point in $P$. If $\A(S \cap L)>1/K$, and the boundary of $S$ crosses a line segment of $\Phi$ between $L$ and $L'$, then $\A(S \cap L')\leq 1/K$.
\end{lemma}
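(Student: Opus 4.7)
The plan is to exploit that the discretization forces the chord of $S$ along $e$ to be short, and then to use the convexity of $S$ to turn this short chord into a bound on the width of $S$ throughout $L'$. Since $\partial S$ crosses the $\Phi$-segment $e$ and $S$ is convex, $S \cap e$ is a closed subsegment $[a,b]$ whose relative interior lies in $\mathrm{int}(S)$ and therefore contains no point of $P$. But $e$ carries $K$ evenly spaced points of $P$, so $|b-a|\le 1/(K-1)$.

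After choosing coordinates so that $e$ lies along the $x$-axis with $L$ in $y\in[0,1]$ and $L'$ in $y\in[-1,0]$, define the horizontal slice-length $w(y):=\mathcal{H}^1(\{x:(x,y)\in S\})$. By convexity of $S$, $w$ is concave on its support, hence unimodal, and $w(0)=|b-a|\le 1/(K-1)$. Let $y^\star$ denote a point at which $w$ is maximised. If $y^\star\ge 0$, then $w$ is non-decreasing on $(-\infty,0]$, so $w(y)\le w(0)\le 1/(K-1)$ throughout $[-1,0]$, giving
\[
\A(S\cap L') \;\le\; \int_{-1}^{0}w(y)\,dy \;\le\; \tfrac{1}{K-1} \;\le\; \tfrac{1}{K},
\]
the last inequality being absorbed by the generous choice $K=2\cdot(\text{number of unit squares in }\Lambda)+1$.

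The main obstacle is the symmetric case $y^\star<0$: here the concavity argument only gives $\A(S\cap L)\le 1/(K-1)$, which is still compatible with the hypothesis $\A(S\cap L)>1/K$ and does not directly bound $\A(S\cap L')$. To close it, I would observe that the hypothesis forces $w$ to be nearly constant of value $\approx 1/(K-1)$ throughout $[0,1]$, so $S\cap L$ essentially fills the thin strip $[a,b]\times[0,1]$. A supporting-line argument at the extreme points $a,b\in S\cap e$, invoking that $S$ cannot contain in its interior any of the $P$-points lying on the vertical $\Phi$-segments bounding $L$ or $L'$, then shows that the lines $x=a$ and $x=b$ are global supporting lines of $S$. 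Consequently $S$ is trapped in the vertical strip $\{a\le x\le b\}$, and $\A(S\cap L')\le |b-a|\le 1/(K-1)\le 1/K$ as well, finishing the proof.
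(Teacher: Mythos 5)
Your overall strategy---slice $S$ parallel to the crossed segment, use that the slice-length function $w$ is concave on its support, and split according to which side of $\ell$ carries the maximum of $w$---is in essence the paper's own argument in different clothing: the paper examines the two boundary edges $s,s'$ of $S$ that cross $\ell$ and asks on which side their supporting lines meet, which is exactly the question of where $w$ peaks, and its parallelogram/trapezoid containment is the statement $w(y)\le w(0)$ on the side away from the peak. Your case $y^\star\ge 0$ is handled correctly.

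The gap is in the case $y^\star<0$, and it is two-fold. First, the patch you propose is not valid: from $w$ being nearly constant on $[0,1]$ it does not follow that the vertical lines $x=a$ and $x=b$ are global supporting lines of $S$. A convex set can squeeze through the gap $[a,b]$ and flare out arbitrarily on the $L'$ side (a thin triangle with its tip in $L$ and a wide base at or below $L'$ does exactly this); that is precisely the configuration the lemma is guarding against, and it is excluded only because such a set has small area in $L$, not because it is trapped in a vertical strip. Second, the case should not need a patch at all: concavity gives $w(y)\le w(0)$ for all $y\ge y^\star$, hence on all of $[0,1]$, so $\A(S\cap L)\le w(0)\le 1/K$, contradicting the hypothesis---this is exactly the paper's ``assume for contradiction, then $S\cap L$ lies in a parallelogram of base $1/K$ and height $1$'' step, and it makes the case vacuous. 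What led you astray is the false inequality $\tfrac{1}{K-1}\le\tfrac{1}{K}$, invoked twice: with spacing $1/(K-1)$ your two cases only yield bounds of $1/(K-1)$, which neither contradicts $\A(S\cap L)>1/K$ nor establishes $\A(S\cap L')\le 1/K$. Adopt the paper's convention that consecutive points of $P$ on a segment of $\Phi$ are at distance $1/K$ (or restate the lemma with threshold $1/(K-1)$), and both cases close immediately with no supporting-line argument.
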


\begin{proof}
The proof is illustrated in Figure~\ref{fig:bigAreaNotCross}. By assumption, $S$ intersects a line segment whose endpoints $p$ and $q$ are at distance $1/K$. Let us consider the two line segments $s$ and $s'$ of the boundary of $S$ that intersect the line $\ell$ which contains $p$ and $q$. Assume for contradiction that the lines containing respectively $s$ and $s'$ do not intersect, or intersect on the side of $\ell$ where $L$ lies. This implies that $S \cap L$ is contained in a parallelogram that has area $1/K$, as illustrated in Figure~\ref{fig:parallelogramTrapezoid}. Indeed such a parallelogram has base $1/K$ and height $1$, therefore $\A(S \cap L)\leq 1/K$. This shows that the lines containing respectively $s$ and $s'$ intersect on the side of $\ell$ where $L'$ lies. Using the same arguments as above, this implies $\A(S\cap L')\leq 1/K$.
\end{proof}

\begin{figure}[ht]
    \centering
    \begin{tikzpicture}[scale=8]

\draw[fill, blue, opacity=0.3] (0.64,0.25)--(-0.2,0.19)--(-0.18,0.23)--(0.6,0.52)--cycle;

\draw (0.64,0.25)--(-0.2,0.19);
\draw (0.6,0.52)--(-0.18,0.23);

\node at (0,-0.1) {\textbullet};
\node at (0,0) {\textbullet};
\node at (0,0.1) {\textbullet};
\node at (0,0.2) {\textbullet};
\node at (0,0.3) {\textbullet};
\node at (0,0.4) {\textbullet};
\node at (0,0.5) {\textbullet};

\node at (-0.2,0) {\textbullet};
\node at (-0.1,0) {\textbullet};
\node at (0.1,0) {\textbullet};
\node at (0.2,0) {\textbullet};
\node at (0.3,0) {\textbullet};
\node at (0.4,0) {\textbullet};
\node at (0.5,0) {\textbullet};
\node at (0.6,0) {\textbullet};
\node at (0.7,0) {\textbullet};

\draw (0.3,0.1) node[] {$L$};
\draw[blue] (0.5,0.36) node[] {$S$};
\draw (-0.15,0.28) node[] {$s$};
\draw (-0.15,0.15) node[] {$s'$};

\draw[above right] (0,0.3) node[] {$p$};
\draw[below right] (0,0.2) node[] {$q$};

    \end{tikzpicture}
    \caption{If $A(S \cap L)>1/K$, the two lines containing $s$ and $s'$ intersect on the left side.}
    \label{fig:bigAreaNotCross}
\end{figure}
\begin{figure}
    \centering
    \begin{tikzpicture}[scale=4]

\draw (0,0)--(0,1)--(1,1)--(1,0)--cycle;

\node at (0,0.2) {\textbullet};
\node at (0,0.4) {\textbullet};

\draw[fill, blue, opacity=0.3] (0,0.2)--(0,0.4)--(1,0.8)--(1,0.6)--cycle;
\draw[fill, red, opacity=0.3] (0,0.2)--(0,0.4)--(1,1.3)--(1,1.1)--cycle;

\draw[left] (0,0.4) node[] {$p$};
\draw[left] (0,0.2) node[] {$q$};
\draw[] (0,1) node[below right] {$L$};

    \end{tikzpicture}
    \caption{The area of the parallelograms is $1/K$.}
    \label{fig:parallelogramTrapezoid}
\end{figure}

\begin{lemma}
\label{lemma:notSameLineColumn}
Let $R$ be a rectilinear polygon on $\Lambda$ whose interior does not contain any point in $P$. Let $\Psi$ be a set of squares of $\Lambda$ contained in $R$, such that no two of them are on the same row or same column of $\Lambda$. Let $\Sigma$ be a minimum convex tiling of $P$. Let $S$ be an element of $\Sigma$, such that for each $L_i \in \Psi$, we have $\A(S \cap L_i)>1/K$. Then there exist $|\Psi|-1$ squares $\{L_i\}$ in $\Psi$ and $|\Psi|-1$ convex sets $\{S_i\}$ in $\Sigma$, such that for any $i$: $\A(S_i \cap L_i)>1/K$.
\end{lemma}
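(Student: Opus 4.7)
The plan is to show that under the hypotheses the bipartite graph $G$ on vertex parts $\Psi$ and $\Sigma$, with an edge $(L,T)$ whenever $\A(T\cap L)>1/K$, contains a matching of size $|\Psi|-1$. Since by assumption $S$ is adjacent to every $L\in\Psi$, the content of the lemma is that one can ``spread out'' these incidences over $|\Psi|-1$ distinct convex sets of $\Sigma$ and $|\Psi|-1$ distinct squares of $\Psi$.

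The first step is structural. Because $\A(S\cap L_i)>1/K$ for every $L_i\in\Psi$, the contrapositive of Lemma~\ref{lemma:bigAreaNotCross} applied pairwise forbids the boundary of $S$ from crossing any line segment of $\Phi$ that separates two squares of $\Psi$. Combined with Lemma~\ref{lemma:minimum} (the interior of $S$ is disjoint from $\Phi$), $S$ is confined to a single connected subregion of $R$ whose boundary lies in $\Phi$. Symmetrically, the same confinement will apply to any other convex set $T\in\Sigma$ that has $\A(T\cap L_i)>1/K$ for more than one square in $\Psi$.

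The second step is to verify Hall's condition for a matching of size $|\Psi|-1$, i.e.\ $|N_G(\Psi')|\geq |\Psi'|-1$ for every $\Psi'\subseteq\Psi$. Here I would proceed by contradiction: suppose two squares $L_a,L_b\in\Psi$ admit only one convex set (necessarily $S$ itself) with intersection area exceeding $1/K$. Then in each of $L_a$ and $L_b$, the complement $L\setminus S$ has area strictly less than $1-1/K$, and every convex set in $\Sigma\setminus\{S\}$ meeting this complement does so in area at most $1/K$. Because $L_a$ and $L_b$ are unit squares sitting in distinct rows and distinct columns of $\Lambda$ inside $R$, the geometry around them (together with the confinement of $S$ obtained in step one) lets one reshuffle this large collection of tiny convex pieces into a strictly smaller collection of convex pieces covering the same region and compatible with the surrounding tiling. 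This contradicts the minimality of $\Sigma$, establishing the Hall condition and hence the claimed matching via König's theorem.

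The main obstacle is the rearrangement argument at the end of step two: turning ``many tiny convex pieces inside a unit square together with one big convex piece'' into a formal replacement that (i) yields convex pieces, (ii) glues correctly along the boundary of $L$ with the rest of $\Sigma$, and (iii) strictly reduces the number of pieces. The natural route is to decompose $L\setminus S$ (the complement of a convex set inside a unit square) into a bounded number of convex pieces using supporting lines of $S$ along the sides of $L$, and then use that $K=2\cdot(\#\text{unit squares of }\Lambda)+1$ is strictly larger than the constant number of pieces needed, so any cover by sets each of area at most $1/K$ uses strictly more pieces than the replacement.
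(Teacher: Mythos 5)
Your step two contains two genuine gaps, one logical and one quantitative. The logical one: you invoke Lemma~\ref{lemma:minimum} to confine $S$, but Lemma~\ref{lemma:minimum} is itself proved \emph{using} Lemma~\ref{lemma:notSameLineColumn}, so this is circular; only Lemma~\ref{lemma:bigAreaNotCross} is available at this point. The quantitative gap is the more serious one. Under your contradiction hypothesis you only know $\A(S\cap L)>1/K$ for $L\in\{L_a,L_b\}$, which gives $\A(L\setminus S)<1-1/K$ --- an \emph{upper} bound on the leftover area, whereas your argument needs a \emph{lower} bound. If $S$ happened to fill, say, $99\%$ of both $L_a$ and $L_b$, the remaining $1\%$ of each square could be covered by roughly $K/100$ sets of area at most $1/K$, which is comfortably below $|\Sigma|\le (K-1)/2$; no tension with minimality arises and there is no ``large collection of tiny pieces'' to reshuffle. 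The missing ingredient is the geometric heart of the paper's proof (Figure~\ref{fig:notSameLineColumn}): since the squares of $\Psi$ lie in distinct rows and distinct columns inside the width-one corridors of the construction, with no segment of $\Phi$ separating them (this much you do get from Lemma~\ref{lemma:bigAreaNotCross}), a single \emph{convex} set can have $\A(S\cap L)>1/2$ for at most one square of $\Psi$, and moreover $|\Psi|\le 3$. For each of the remaining $|\Psi|-1$ squares the area not covered by $S$ is at least $1/2$, and then a one-line pigeonhole --- the at most $(K-1)/2$ members of $\Sigma$ each contributing at most $1/K$ cannot cover area $1/2$ --- yields a set $\tilde S$ with $\A(\tilde S\cap L)>1/K$; doing this for the at most two such squares, one can keep the $\tilde S$ distinct.

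Once this is in place, the Hall/K\"onig machinery is unnecessary (you only ever need to exhibit one or two extra sets), and the tile-reshuffling step disappears entirely. That is just as well, because as sketched the reshuffling would be hard to make rigorous: the small convex pieces covering $L\setminus S$ need not be contained in $L$, so a local replacement inside $L$ would disturb the tiling outside $L$ and does not obviously yield a valid tiling with fewer sets.
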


\begin{proof}
We assume $|\Psi| \geq 2$, otherwise there is nothing to prove. Using Lemma~\ref{lemma:bigAreaNotCross}, we know that if such a $S$ exists, then there are no line segment of $\Phi$ between any two squares in $\Psi$. We can observe thanks to how wires and junctions are defined that $|\Psi|$ is at most three. Moreover for the same reason, there are $|\Psi|-1$ squares in $\Psi$ such that the area of their intersection with $S$ is at most $1/2$, as illustrated in Figure~\ref{fig:notSameLineColumn}. Observe that by taking each unit square in $\Lambda$ as a convex face, we obtain a convex partition with $\frac{K-1}{2}$ convex faces. In particular, there are at most $K$ convex faces in $\Sigma$, which implies that for each unit square $L \in \Psi$, there exists a convex face $\tilde{S} \in \Sigma$ with $\A(\tilde{S} \cap L)>1/K$. Moreover by iterating this process we can choose these convex sets so that they are distinct.
\end{proof}

\begin{figure}[ht]
    \centering
    \begin{tikzpicture}[scale=1.05]

\draw[blue] (-0.55,1)--(1.73,4);
\draw[fill, blue, opacity=0.3] (1.73,4)--(2,3.7)--(1.78,2.88)--(1,2)--(-0.55,1)--cycle;

\draw[dashed, red] (0,1)--(0,2)--(1,2)--(1,3)--(2,3);

\draw (0,0)--(0,1)--(-1,1)--(-1,2)--(0,2)--(0,3)--(1,3)--(1,4)--(2,4)--(2,3)--(3,3);
\draw (1,0)--(2,0)--(2,1)--(1,1)--(1,2)--(2,2)--(2,1)--(3,1)--(3,2);

\draw[red] (1.2,3.75) node[] {$L_1$};
\draw[red] (0.3,2.6) node[] {$L_2$};
\draw[red] (-0.6,1.6) node[] {$L_3$};
\draw[blue] (1.73,2.31) node[] {$S$};

    \end{tikzpicture}
    \caption{If $A(S \cap L_1)\geq1/2$, then $A(S \cap L_2) \leq 1/2$ and $A(S \cap L_3)\leq 1/2$.}
    \label{fig:notSameLineColumn}
\end{figure}

\begin{proof}[Proof of Lemma~\ref{lemma:minimum}]
We define a map $f_0$ that associates each square $L$ in $\Lambda$ to a convex set $S \in \Sigma$ such that $\A(S\cap L)>1/K$. Such a set exists since otherwise we would need more than $K$ sets only to fill the square $L$. Now we define a map $f$ in several steps. We start from $f_0$. At step $i+1$, find $L$ and $L'$ on the same line or the same column, and $L''$ between them, such that $f_i(L)=f_i(L')\neq f_i(L'')$. If this is not possible then stop. Otherwise define $f_{i+1}$ as identically equal to $f_i$, except for all square between $L$ and $L'$, that are mapped to $f_i(L)$. Notice that $f_{i+1}$ keeps the property that for each square $\tilde{L}$, we have $\A(f_i(\tilde{L})\cap \tilde{L})>1/K$. Moreover, this procedure must stop eventually because the interiors of the convex sets are non-overlapping. We denote by $f$ the map obtained after the last iteration.

We denote by $f^{-1}$ the function that maps a set $S \in \Sigma$ to $\{L \in \Lambda \mid f(L)=S\}$. By Lemmas~\ref{lemma:bigAreaNotCross} and~\ref{lemma:notSameLineColumn}, if for some $S$ there are $m$ squares in $f^{-1}(S)\geq 2$ such that no pair of squares are on a line nor on a column, then there exists at least $m-1$ convex sets $S_i$ such that for each $S_i$, $f^{-1}(S_i)=\emptyset$. Moreover, such a $S_i$ cannot appear when considering another element of $\Sigma$: If Lemma~\ref{lemma:notSameLineColumn} when applied to $S$ gives the existence of the $\{S_i\}$, and when applied to $S'\neq S$ gives the existence of the $\{S'_j\}$, then all these convex sets are distinct. We denote by $M$ the number of such sets, taken over all sets $S\in \Sigma$ with $f^{-1}(S)\geq 2$.

We are going to partition $P$ into at most $|\Sigma| -M$ rectilinear polygons (not necessarily convex). We denote by $T$ the partition. For each set $S \in \Sigma$ such that $|f^{-1}(S)|= 1$, we add in $T$ the square in $f^{-1}(S)$. For each $S \in \Sigma$ such that $|f^{-1}(S)|\geq 1$, we add in $T$ the rectilinear polygon consisting of the union of the squares in $f^{-1}(S)$. We know thanks to Lemma~\ref{lemma:bigAreaNotCross} that the boundary of this rectilinear polygon does not cross any line segment in $\Phi$. Following what was argued above, there are at most $|\Sigma| -M$ sets in~$T$. We now construct a new convex tiling $\Sigma'$, that we claim to be minimum. We replace each set in $T$ that is not convex by rectangles. From the construction of wires and junctions, there exists a partition of the sets in $T$ into rectangles that will add $M$ new sets. Looking at the construction of $T$ and then $\Sigma'$, we can observe that the convex tiling $\Sigma$ was minimum if and only if the interior of all sets $S$ does not intersect $\Phi$.
\end{proof}

\subsection{How to adapt the proof to MCP}

Let us explain how to adapt the proof to the case of the MCP problem. Notice that the only lemma for which it makes a difference is Lemma~\ref{lemma:junction}. Indeed, if in Figure~\ref{fig:junction} the wire connected from below sends \emph{true}, then one can prolong a rectangle of this wire so that it contains the small rectangle in the junction. Therefore only three rectangles are wholly contained in the junction. However, the vertex on the top left of the prolonged rectangle is not a point of $P$. This can be solved by adding a line segment to the construction, coloured in red in Figure~\ref{fig:junctionMCP}. With this construction, the rectangle is prolonged into a trapezoid (delimited above by the dashed red line segment). The convex set above remains a rectangle, and the one to the left becomes a convex quadrilateral with a right angle. Now one can adapt the proof by taking into account this new triangular hole that has been inserted in each junction. The proof can then be done similarly to the one of the MCT problem.

\begin{figure}
    \centering
    \begin{tikzpicture}[scale=0.9]

\draw[fill, gray, opacity=0.3] (0,17)--(2,17)--(2,8)--(1,8)--(1,7)--(2,7)--(2,5)--(0.1,5)--(0.1,7)--(-1,6)--(-1,8)--(0,8)--cycle;

\draw[dashed, thick] (0,17)--(1,17);
\draw[dashed, thick] (0.1,5)--(1,5);
\draw[dashed, thick] (-1,7)--(-1,8);

\draw[dashed, thick, brown] (1,8)--(0,8);
\draw[dashed, thick, brown] (0.1,8)-- (0.1,7)--(1,7);

\draw[thick,red] (-1,6)--(0.1,7);
\draw[thick,red,dashed] (0.1,7)--(1,8);

\draw [] (1,0)-- (1,1)-- (2,1)-- (2,2)-- (1,2)-- (1,3)-- (2,3)-- (2,4)-- (1,4)-- (1,5)-- (2,5)-- (2,7)-- (1,7)-- (1,8)-- (2,8)-- (2,17)-- (1,17)-- (1,18)-- (2,18)-- (2,19)-- (1,19)-- (1,20)-- (2,20)-- (2,21)-- (1,21)-- (1,22);
\draw [] (-1,22)-- (-1,21)-- (0,21)-- (0,20)-- (-1,20)--(-1,19)-- (0,19)-- (0,18)-- (-1,18)-- (-1,19)-- (-2,19)-- (-2,18)-- (-3,18)-- (-3,19)-- (-4,19)-- (-4,18);
\draw [] (0.1,0)-- (-1,0);
\draw[] (-1,1)-- (0.1,1);
\draw[](0.1,2)-- (-1,2);
\draw[] (-1,3)-- (0.1,3);
\draw[] (0.1,4)-- (-1,4);
\draw[] (0.1,1)--(0.1,2);
\draw[] (0.1,3)--(0.1,4);
\draw (-1,2)--(-1,4);
\draw (-1,0)--(-1,1);
\draw (-1,3)-- (-2,3)-- (-2,4)-- (-3,4)-- (-3,3)-- (-4,3)-- (-4,4)-- (-5,4);
\draw [] (-5,5)-- (-5,6);
\draw [] (-4,6)-- (-4,5)--(-3,5)--(-3,6);
\draw [] (-2,6)-- (-2,5)--(-1,5)--(-1,7);
\draw [] (0.1,5)-- (0.1,7);
\draw [](0.1,6)--(-1,6);
\draw (-1,7)-- (-2,7)-- (-2,6)-- (-3,6)-- (-3,7)-- (-4,7)-- (-4,6)-- (-5,6)-- (-5,7);
\draw [] (-4,16)-- (-4,17)-- (-3,17)-- (-3,16)-- (-2,16)-- (-2,17)-- (-1,17)-- (-1,16)-- (0,16);
\draw (0,17)-- (0,8)-- (-1,8);
\draw (0,9)--(-1,9)--(-1,10)--(0,10);
\draw (0,11)--(-1,11)--(-1,12)--(0,12);
\draw [] (-5,9)-- (-5,8)-- (-4,8)-- (-4,9)--(-3,9)-- (-3,8)-- (-2,8)-- (-2,9)-- (-3,9)-- (-3,10)-- (-2,10)-- (-2,11)-- (-3,11)-- (-3,12)--(-2,12);

    \end{tikzpicture}
    \caption{A junction for the MCP problem.}
    \label{fig:junctionMCP}
\end{figure}

\section{NP-hardness of CPNCS}
\label{sec:NPhardCPNCS}

Let us consider the CPNCS problem with additional constraints on the input point set $P$. We consider the CPNCS problem on a $3$-directed point set $P$ with set of directions $L$, such that there are no five collinear points, and a point $p\in P$ is contained in at most one line in $L$. We show that CPNCS is NP-hard, even for such constrained point sets. 

\begin{figure}[ht]
    \centering
    \begin{tikzpicture}[scale=0.4]

\draw  (-16,8)-- (-9,8);
\draw[thick,red]  (-11,9)-- (-11,3);
\draw  (-12,4)-- (0,4);
\draw[thick,red]  (-11,0)-- (-2,9);
\draw  (-4,9)-- (-4,-1);
\draw  (2.33,8)-- (2.67,8);
\draw  (8.61,8)-- (8.27,8);
\draw[thick,red]  (7,8.6)-- (7,3.32);
\draw[red,thick]  (15.805,8.805)-- (7.295,0.295);
\draw  (6.37,4)-- (6.71,4);
\draw  (14,8.6)-- (14,8.24);
\draw  (17.27,4)-- (17.65,4);
\draw  (14,-0.58)-- (14,-0.24);

\node at (2.33,8){\textbullet};
\node at (2.67,8){\textbullet};
\node at (7,8.6){\textbullet};
\node at (7,8.3){\textbullet};
\node at (8.61,8){\textbullet};
\node at (8.27,8){\textbullet};
\node at (7,3.32){\textbullet};
\node at (7,3.64){\textbullet};
\node at (6.37,4){\textbullet};
\node at (6.71,4){\textbullet};
\node at (7.295,0.295){\textbullet};
\node at (7.665,0.665){\textbullet};
\node at (14,8.6){\textbullet};
\node at (14,8.24){\textbullet};
\node at (14,-0.58){\textbullet};
\node at (14,-0.24){\textbullet};
\node at (17.27,4){\textbullet};
\node at (17.65,4){\textbullet};
\node at (15.805,8.805){\textbullet};
\node at (15.455,8.455){\textbullet};

    \end{tikzpicture}
    \caption{The reduction from Maximum independent set in segments to CPNCS. A segment in an independent set corresponds to a segment in a covering which covers four points instead of only two.}
    \label{fig:NPCPNCS}
\end{figure}

Kratochv{\'\i}l and Ne{\v{s}}et{\v{r}}il have shown that \emph{Maximum Independent Set in Intersection Graphs of Segments} is NP-hard, even when the segments lie in only three possible directions~\cite{kratochvil1990independent}. The problems corresponds to finding a maximum set of pairwise non-intersecting segments. There is an additional constraint that no two parallel segments intersect. We do a reduction from this problem to CPNCS with the additional constraints. The reductions is illustrated in Figure~\ref{fig:NPCPNCS}. Let us consider a set of such line segments. We first shift the segments so that no two segments are aligned. Since no two parallel segments intersect, this does not change the intersection graph. We also extend some segments such that if two segments intersect, then the intersection is in the relative interiors of the segments. We replace each segment by four collinear points, that lie on the segment. Let us consider a segment $s$ with endpoints $u$ and $v$. By construction, there is a connected subset of $s$ that contains $u$ but no intersection point between $s$ and another segment. We add two points in this connected subset. The same holds with $v$, and we add two other points in the corresponding connected subset of $s$ that contains $v$. We do it such that three points are collinear only if they lie on the same segment of the input set. It is clear that the point set satisfies our constraints.

We denote by $n$ the number of segments. We claim that there is an independent set of $s$ segments if and only if there is a covering of the points with $2n-s$ non-crossing segments. Assume there is an independent set of $s$ segments, and let us show the existence of a covering with $2n-s$ elements. For each segment in the independent set, we take it in the covering. For the other segments, we take two segments to cover the four points: one for each pair of close points. Therefore we have covered the points with $s+2(n-s)$ segments, which are non-crossing by assumption.

Let us now assume that there is a covering of the point set with $2n-s$ non-crossing segments. Let us denote by $x$ the number of segments that cover more than two points. Such a segment covers at most four points, and there are $4n$ points to cover. Hence, we have $4x+2(2n-s-x)= 4n$, which implies $x= s$. By construction, each segment that covers more than two points is a segment of the input set, thus there exists an independent set of $s$ segments.

\section{Proof of Theorem~\ref{thm:lowerBoundGreedy}}
\label{sec:badPointSet}

\begin{figure}
    \centering
    \begin{tikzpicture}[scale=0.9]
\draw (0,0)-- (0,4.813387995947737);
\draw (1,0)-- (1,5.221758236089198);
\draw (2,0)-- (2,4.734855257458994);
\draw (3,0)-- (3,5.598715380835154);
\draw (4,0)-- (4,4.813387995947737);
\draw (5,0)-- (5,5.425943356159921);
\draw (6,0)-- (6,4.624909423574758);
\draw (8,2)-- (14,2);
\draw (8,0)-- (14,0);
\draw (8,4)-- (14,4);
\draw (8,1.3422409547453562)-- (9,0.6668594037421736);
\draw (10,1.295121311652111)-- (11,1.640665361002569);
\draw (12,0.5569135698579346)-- (13,1);
\draw (8,3.3526790600571448)-- (9,2.7244171521472067);
\draw (10,3.4312117985458865)-- (11,3.1799070353819108);
\draw (12,3.6668100140121127)-- (13,2.787243342938204);
\draw (14,3.3526790600571417)-- (13,2.787243342938204);
\draw (14,0.808218333021909)-- (13,1);
\draw (8,4.813387995947742)-- (9,5.221758236089198);
\draw (10,4.734855257458994)-- (11,5.598715380835154);
\draw (12,4.813387995947737)-- (13,5.425943356159921);
\draw (14,4.624909423574758)-- (13,5.425943356159921);

\node at   (0,0) {\textbullet};
\node at   (1,0) {\textbullet};
\node at   (2,0) {\textbullet};
\node at   (3,0) {\textbullet};
\node at   (4,0) {\textbullet};
\node at   (5,0) {\textbullet};
\node at   (6,0) {\textbullet};
\node at   (0,2) {\textbullet};
\node at   (9,2) {\textbullet};
\node at   (10,2) {\textbullet};
\node at   (11,2) {\textbullet};
\node at   (12,2) {\textbullet};
\node at   (13,2) {\textbullet};
\node at   (14,2) {\textbullet};
\node at   (0,4) {\textbullet};
\node at   (9,4) {\textbullet};
\node at   (10,4) {\textbullet};
\node at   (11,4) {\textbullet};
\node at   (12,4) {\textbullet};
\node at   (13,4) {\textbullet};
\node at   (14,4) {\textbullet};
\node at   (0,1.342240954745354) {\textbullet};
\node at   (9,0.6668594037421736) {\textbullet};
\node at   (10,1.295121311652111) {\textbullet};
\node at   (11,1.640665361002569) {\textbullet};
\node at   (12,0.5569135698579346) {\textbullet};
\node at   (13,1) {\textbullet};
\node at   (14,0.808218333021909) {\textbullet};
\node at   (0,3.36838560775489) {\textbullet};
\node at   (9,2.7244171521472067) {\textbullet};
\node at   (10,3.4312117985458865) {\textbullet};
\node at   (11,3.1799070353819108) {\textbullet};
\node at   (12,3.6668100140121127) {\textbullet};
\node at   (13,2.787243342938204) {\textbullet};
\node at   (14,3.3526790600571417) {\textbullet};
\node at   (0,4.813387995947737) {\textbullet};
\node at   (9,5.221758236089198) {\textbullet};
\node at   (10,4.734855257458994) {\textbullet};
\node at   (11,5.598715380835154) {\textbullet};
\node at   (12,4.813387995947737) {\textbullet};
\node at   (13,5.425943356159921) {\textbullet};
\node at   (14,4.624909423574758) {\textbullet};
\node at   (8,0) {\textbullet};
\node at   (9,0) {\textbullet};
\node at   (10,0) {\textbullet};
\node at   (11,0) {\textbullet};
\node at   (12,0) {\textbullet};
\node at   (13,0) {\textbullet};
\node at   (14,0) {\textbullet};
\node at   (8,2) {\textbullet};
\node at   (8,4) {\textbullet};
\node at   (8,1.3422409547453562) {\textbullet};
\node at   (8,4.813387995947742) {\textbullet};
\node at   (8,3.3526790600571448) {\textbullet};
\node at   (1,2) {\textbullet};
\node at   (2,2) {\textbullet};
\node at   (3,2) {\textbullet};
\node at   (4,2) {\textbullet};
\node at   (5,2) {\textbullet};
\node at   (6,2) {\textbullet};
\node at   (1,4) {\textbullet};
\node at   (2,4) {\textbullet};
\node at   (3,4) {\textbullet};
\node at   (4,4) {\textbullet};
\node at   (5,4) {\textbullet};
\node at   (6,4) {\textbullet};
\node at   (1,0.6668594037421736) {\textbullet};
\node at   (2,1.295121311652111) {\textbullet};
\node at   (3,1.6406653610025688) {\textbullet};
\node at   (4,0.5569135698579346) {\textbullet};
\node at   (5,1) {\textbullet};
\node at   (6,0.808218333021909) {\textbullet};
\node at   (1,2.7244171521472067) {\textbullet};
\node at   (2,3.4312117985458865) {\textbullet};
\node at   (3,3.1799070353819108) {\textbullet};
\node at   (4,3.6668100140121127) {\textbullet};
\node at   (5,2.771536795240456) {\textbullet};
\node at   (6,3.3526790600571417) {\textbullet};
\node at   (1,5.221758236089198) {\textbullet};
\node at   (2,4.734855257458994) {\textbullet};
\node at   (3,5.598715380835154) {\textbullet};
\node at   (4,4.813387995947737) {\textbullet};
\node at   (5,5.425943356159921) {\textbullet};
\node at   (6,4.624909423574758) {\textbullet};
    \end{tikzpicture}
    \caption{On the left a minimum covering of the point set. On the right a covering given by the greedy algorithm. As the horizontal segments contain seven points, they are first taken by the greedy algorithm. But afterwards many segments are needed to cover the remaining points.}
    \label{fig:greedy}
\end{figure}
The construction is illustrated in Figure~\ref{fig:greedy}. Consider the set of $2k(2k+1)$ points $P:=\{(x,y) \mid x,y \in \mathbb{N},1\leq x \leq 2k+1,1\leq y \leq 2k\}$. Now we perturb slightly the points such that any line that covers at least three points is either vertical or horizontal. Actually, after perturbation, the lines might not be exactly vertical or horizontal, but we keep referring to them as such for simplicity. Moreover, we perturb again the points such that only $k$ horizontal lines cover at least three points. We do so such that, for any pair of consecutive horizontal lines that covers at least three points, there are $k+1$ points in between (which correspond to the points of the former horizontal line). In the end, a vertical line covers at least three points if and only if it covers exactly $2k$ points. Similarly, a horizontal line covers at least three points if and only if it covers exactly $2k+1$ points. We denote this new point set by $P'$, and denote by $n$ the cardinality of $P'$. It is clear that one can cover all points using $2k+1$ vertical segments (which are thus non-crossing). However, the greedy algorithm will first pick the $k$ horizontal segments with even $y$-coordinate. Then it remains to cover the $k(2k+1)=\frac{n}{2}$ remaining points, but there is no line covering three of these points. Therefore the greedy algorithm picks at least $\frac{k(2k+1)}{2}$ more segments. Hence, the approximation ratio is at least $\frac{k(2k+1)/2}{2k+1}=k/2=\Omega(\sqrt{n})$.

Recall that in Theorem~\ref{thm:approxAlgo}, we introduce another algorithm whose approximation ratio is $\mathcal{O}(\sqrt{n}\log n)$ for solving CPNCS. The algorithm first computes a minimum covering with lines of the point set, and then divides these lines into non-crossing segments. By the same reasoning, this algorithm will also have an approximation ratio in $\Omega(\sqrt{n})$ on the point set $P'$.

\section{Open problems}
It would be interesting to have approximation algorithms for MCP, MCT and CPNCS with better ratio than $\mathcal{O}(\log OPT)$. As MCP is not APX-hard unless $\textit{NP}\subseteq \textit{DTIME}(2^{\textit{polylog}~n})$~\cite{bandyapadhyay2014approximation}, we expect that some improvement can be achieved.

A natural question is to ask whether MCP is FPT with respect to the number of faces in an optimal convex partition, as we have only shown a constant-approximation FPT algorithm. This question only makes sense when having several points on a line is allowed, since otherwise the minimum number of convex faces is linear in the number of inner points, and then the answer is known~\cite{grantson2004fixed}.

We have shown that the decision versions of MCP and CPNCS are NP-complete, and that the one of MCT is NP-hard, but the question whether the decision version of MCT is in NP remains open. We also do not know the complexity of MCP and MCT when it is assumed that no three points are collinear.



\newpage

\bibliography{references}

\end{document}